\def\emph{\textbf}
\def\stress{\textit}
\def\sfrac#1#2{\nicefrac{#1}{#2}}
\def\Vorobev{Vorob\textquotesingle\-ev}
\def\Mdot{\text{ .}}
\def\Mcomma{\text{ ,}}
\theoremstyle{plain}
\newtheorem{theorem}{Theorem}[section]
\newtheorem{proposition}[theorem]{Proposition}
\theoremstyle{definition}
\newtheorem{definition}[theorem]{Definition}
\newenvironment{calculation}{\begin{eqnarray*}&&}{\end{eqnarray*}}
\def\just#1#2{\\ &#1& \rule{2em}{0pt} \{ \mbox{\rule[-.7em]{0pt}{1.8em} \footnotesize #2 \/} \} \nonumber\\ && }
\def\ejust#1{\\ &#1& \nonumber\\ && }
\def\bimplies{\Leftrightarrow}
\newcommand{\mleft}{\mathopen{}\mathclose\bgroup\left}
\newcommand{\mright}{\aftergroup\egroup\right}
\def\setdef#1#2{\mleft\{#1\mid #2\mright\}}             
\def\enset#1{\mleft\{#1\mright\}} 
\def\Forall#1{\forall_{#1}\boldsymbol{.}\;}
\def\Exists#1{\exists_{#1}\boldsymbol{.}\;}
\def\tuple#1{\mathopen{\langle} #1 \mathclose{\rangle}}
\def\UU{\mathcal{U}}
\def\sjoin{*}
\def\dirprod{\times}
\def\Disj{\mathfrak{D}}
\def\Disjcopy#1#2{\Disj_{#1}\!\!^{\sjoin #2}}
\def\sr{\mathsf{sr}}
\def\proper#1{\pi_{#1}}
\newcommand{\ABTableNUM}[4]{
\begin{array}{cc|cccc}
  \text{A} & \text{B} & 0\,0 & 0\,1 & 1\,0 & 1\,1 \\
  \hline
  a_1 & b_1 & #1 \\ 
  a_1 & b_2 & #2 \\ 
  a_2 & b_1 & #3 \\ 
  a_2 & b_2 & #4 \\ 
\end{array}
}
\newcommand{\ACTableNUM}[4]{
\begin{array}{cc|cccc}
  \text{A} & \text{C} & 0\,0 & 0\,1 & 1\,0 & 1\,1 \\
  \hline
  a_1 & c_1 & #1 \\ 
  a_1 & c_2 & #2 \\ 
  a_2 & c_1 & #3 \\ 
  a_2 & c_2 & #4 \\ 
\end{array}
}
\newcommand{\AMTableNUM}[4]{
\begin{array}{cc|cccc}
  \text{A} & \text{M} & 0\,0 & 0\,1 & 1\,0 & 1\,1 \\
  \hline
  a_1 & m_1 & #1 \\ 
  a_1 & m_2 & #2 \\ 
  a_2 & m_1 & #3 \\ 
  a_2 & m_2 & #4 \\ 
\end{array}
}
\newcommand{\ABCTableNUM}[8]{
\begin{array}{ccc|cccccccc}
  \text{A} & \text{B} & \text{C} & 0\,0\,0 & 0\,0\,1 & 0\,1\,0 & 0\,1\,1 & 1\,0\,0 & 1\,0\,1 & 1\,1\,0 & 1\,1\,1\\
  \hline
  a_1 & b_1 & c_1 & #1 \\ 
  a_1 & b_1 & c_2 & #2 \\ 
  a_1 & b_2 & c_1 & #3 \\ 
  a_1 & b_2 & c_2 & #4 \\ 
  a_2 & b_1 & c_1 & #5 \\ 
  a_2 & b_1 & c_2 & #6 \\ 
  a_2 & b_2 & c_1 & #7 \\ 
  a_2 & b_2 & c_2 & #8 \\ 
\end{array}
}
 \def\ABCTableWstateZXmeas{
\ABCTableNUM
{  \sfrac{3}{8}  & \sfrac{1}{24} & \sfrac{1}{24} & \sfrac{1}{24}
 & \sfrac{1}{24} & \sfrac{1}{24} & \sfrac{1}{24} & \sfrac{3}{8}  }  
{  \sfrac{1}{3}  & \sfrac{1}{12} & 0             & \sfrac{1}{12}
 &  0            & \sfrac{1}{12} & \sfrac{1}{3}  & \sfrac{1}{12} } 
{  \sfrac{1}{3}  &  0            & \sfrac{1}{12} & \sfrac{1}{12} 
 &  0            & \sfrac{1}{3}  & \sfrac{1}{12} & \sfrac{1}{12} } 
{  \sfrac{1}{6}  & \sfrac{1}{6}  & \sfrac{1}{6}  &  0            
 & \sfrac{1}{6}  & \sfrac{1}{6}  & \sfrac{1}{6}  &  0            }
{  \sfrac{1}{3}  &  0            &  0            & \sfrac{1}{3}  
 & \sfrac{1}{12} & \sfrac{1}{12} & \sfrac{1}{12} & \sfrac{1}{12} } 
{  \sfrac{1}{6}  & \sfrac{1}{6}  & \sfrac{1}{6}  & \sfrac{1}{6}  
 & \sfrac{1}{6}  &  0            & \sfrac{1}{6}  &  0            } 
{  \sfrac{1}{6}  & \sfrac{1}{6}  & \sfrac{1}{6}  & \sfrac{1}{6}  
 & \sfrac{1}{6}  & \sfrac{1}{6}  &  0            &  0            } 
{   0            & \sfrac{1}{3}  & \sfrac{1}{3}  &  0            
 & \sfrac{1}{3}  &  0            &  0            &  0            } 
 }
 \def\ABCTableSvetlichny{
\ABCTableNUM
{  \sfrac{1}{4}  &       0       &       0       & \sfrac{1}{4} 
 &       0       & \sfrac{1}{4}  & \sfrac{1}{4}  &       0       }  
{  \sfrac{1}{4}  &       0       &       0       & \sfrac{1}{4} 
 &       0       & \sfrac{1}{4}  & \sfrac{1}{4}  &       0       }  
{  \sfrac{1}{4}  &       0       &       0       & \sfrac{1}{4} 
 &       0       & \sfrac{1}{4}  & \sfrac{1}{4}  &       0       }  
{        0       & \sfrac{1}{4}  & \sfrac{1}{4}  &       0      
 & \sfrac{1}{4}  &       0       &       0       & \sfrac{1}{4}  } 
{  \sfrac{1}{4}  &       0       &       0       & \sfrac{1}{4} 
 &       0       & \sfrac{1}{4}  & \sfrac{1}{4}  &       0       }  
{        0       & \sfrac{1}{4}  & \sfrac{1}{4}  &       0      
 & \sfrac{1}{4}  &       0       &       0       & \sfrac{1}{4}  } 
{        0       & \sfrac{1}{4}  & \sfrac{1}{4}  &       0      
 & \sfrac{1}{4}  &       0       &       0       & \sfrac{1}{4}  } 
{        0       & \sfrac{1}{4}  & \sfrac{1}{4}  &       0      
 & \sfrac{1}{4}  &       0       &       0       & \sfrac{1}{4}  } 
}
\def\ABCTableOtherNonSym{
\ABCTableNUM
{  \sfrac{1}{4}  & \sfrac{1}{4}  &       0       &       0       
 &       0       &       0       & \sfrac{1}{4}  & \sfrac{1}{4}  }  
{ \sfrac{1}{4}   & \sfrac{1}{4}  &       0       &       0       
 &       0       &       0       & \sfrac{1}{4}  & \sfrac{1}{4}  } 
{ \sfrac{1}{4}   & \sfrac{1}{4}  &       0       &       0      
 &       0       &       0       & \sfrac{1}{4}  & \sfrac{1}{4}  } 
{ \sfrac{1}{4}   & \sfrac{1}{4}  &       0       &       0      
 &       0       &       0       & \sfrac{1}{4}  & \sfrac{1}{4}  } 
{ \sfrac{1}{4}   & \sfrac{1}{4}  &       0       &       0      
 &       0       &       0       & \sfrac{1}{4}  & \sfrac{1}{4}  } 
{ \sfrac{1}{4}   & \sfrac{1}{4}  &       0       &       0      
 &       0       &       0       & \sfrac{1}{4}  & \sfrac{1}{4}  } 
{       0        &       0       & \sfrac{1}{4}  & \sfrac{1}{4} 
 & \sfrac{1}{4}  & \sfrac{1}{4}  &       0       &       0       } 
{       0        &       0       & \sfrac{1}{4}  & \sfrac{1}{4} 
 & \sfrac{1}{4}  & \sfrac{1}{4}  &       0       &       0       } 
}
\def\AMTableWstateZXmeasQUOT{
\AMTableNUM
{ \sfrac{5}{24} & \sfrac{1}{24}  & \sfrac{1}{24}  & \sfrac{5}{24}  }  
{ \sfrac{1}{6}  & \sfrac{1}{12}  & \sfrac{1}{6}   & \sfrac{1}{12}  } 
{ \sfrac{1}{6}  & \sfrac{1}{6}   & \sfrac{1}{12}  & \sfrac{1}{12}  } 
{ \sfrac{1}{6}  & \sfrac{1}{6}   & \sfrac{1}{6}   &      0         } 
}
\def\AMTableTotalMix{
\AMTableNUM
{ \sfrac{1}{4}  & \sfrac{1}{4}   & \sfrac{1}{4}   & \sfrac{1}{4}  }  
{ \sfrac{1}{4}  & \sfrac{1}{4}   & \sfrac{1}{4}   & \sfrac{1}{4}  } 
{ \sfrac{1}{4}  & \sfrac{1}{4}   & \sfrac{1}{4}   & \sfrac{1}{4}  } 
{ \sfrac{1}{4}  & \sfrac{1}{4}   & \sfrac{1}{4}   & \sfrac{1}{4}  } 
}
\def\ABTablePRmodel{
\ABTableNUM
{\sfrac{1}{2} &  0  &  0  & \sfrac{1}{2}}
{\sfrac{1}{2} &  0  &  0  & \sfrac{1}{2}}
{\sfrac{1}{2} &  0  &  0  & \sfrac{1}{2}}
{0   & \sfrac{1}{2} & \sfrac{1}{2} &  0 }}
\def\ACTableTotalMix{
\ACTableNUM
{ \sfrac{1}{4}  & \sfrac{1}{4}   & \sfrac{1}{4}   & \sfrac{1}{4}  }  
{ \sfrac{1}{4}  & \sfrac{1}{4}   & \sfrac{1}{4}   & \sfrac{1}{4}  } 
{ \sfrac{1}{4}  & \sfrac{1}{4}   & \sfrac{1}{4}   & \sfrac{1}{4}  } 
{ \sfrac{1}{4}  & \sfrac{1}{4}   & \sfrac{1}{4}   & \sfrac{1}{4}  } 
}
\def\AMTableOtherNonSymQUOT{
\AMTableNUM
{ \sfrac{3}{8} & \sfrac{1}{8} & \sfrac{1}{8} & \sfrac{3}{8} }  
{ \sfrac{3}{8} & \sfrac{1}{8} & \sfrac{1}{8} & \sfrac{3}{8} }  
{ \sfrac{3}{8} & \sfrac{1}{8} & \sfrac{1}{8} & \sfrac{3}{8} }  
{ \sfrac{1}{8} & \sfrac{3}{8} & \sfrac{3}{8} & \sfrac{1}{8} }  
}
\newcommand{\FigMeasInteractionSingle}{
\begin{tikzpicture}[scale=1]
  \edef\widbox{2.4}
  \edef\heibox{1.2}

  \edef\widdisplay{1.2}
  \edef\displayneedle{0.4}
  \edef\lildisplayrad{0.1}
  \edef\displaymarkerlen{0.15}

  \edef\sourcedist{4}
  \edef\sourceradius{0.5}

  \edef\dispdist{4}
  \edef\dispmaxdy{1.2}
 
  \edef\ax{0}
  \edef\ay{0}

  \path (\ax,\ay) coordinate (A); 
  \path (A) +(-\widbox/2,0) coordinate (AIn);
  \path (A) +(+\widbox/2,0) coordinate (AOut);
  \path (-\sourcedist,0) coordinate (S);
  \path (A) +(+\dispdist,\dispmaxdy) coordinate (DispZero);
  \path (A) +(+\dispdist,-\dispmaxdy-\widdisplay/2)  coordinate (DispOne);
  \path (DispZero) +(-\widdisplay/2-0.1,\widdisplay/4) coordinate (DispZeroIn);
  \path (DispOne)  +(-\widdisplay/2-0.1,\widdisplay/4) coordinate (DispOneIn);

  \shadedraw[thick, black, left color = black!20, right color = black!50]
    (\ax-\widbox/2,\ay-\heibox/2) -- (\ax+\widbox/2,\ay-\heibox/2) -- (\ax+\widbox/2,\ay+\heibox/2) -- (\ax-\widbox/2,\ay+\heibox/2) -- cycle;
  
    \path (A) node {$a$};
 
  \draw[thick] (DispZero) +(\widdisplay/2,0) arc (0:180:\widdisplay/2);
  \draw[thick] (DispZero) +(\widdisplay/2,0) -- +(-\widdisplay/2,0);
  
  \draw[black, thick] (DispZero) -- +(135:\displayneedle);
  \fill[black] (DispZero) +(180:\lildisplayrad) arc (180:0:\lildisplayrad);
  \draw[very thick, black] (DispZero) +( 45:\widdisplay/2-\displaymarkerlen) -- +( 45:\widdisplay/2);
  \draw[very thick, black] (DispZero) +(135:\widdisplay/2-\displaymarkerlen) -- +(135:\widdisplay/2);
  \path (DispZero) +( 45:\widdisplay/2+0.2) node {{\scriptsize $1$}};
  \path (DispZero) +(135:\widdisplay/2+0.2) node {{\scriptsize $0$}};
  
  \draw[thick] (DispOne) +(\widdisplay/2,0) arc (0:180:\widdisplay/2);
  \draw[thick] (DispOne) +(\widdisplay/2,0) -- +(-\widdisplay/2,0);
  
  \draw[black, thick] (DispOne) -- +(45:\displayneedle);
  \fill[black] (DispOne) +(180:\lildisplayrad) arc (180:0:\lildisplayrad);
  \draw[very thick, black] (DispOne) +( 45:\widdisplay/2-\displaymarkerlen) -- +( 45:\widdisplay/2);
  \draw[very thick, black] (DispOne) +(135:\widdisplay/2-\displaymarkerlen) -- +(135:\widdisplay/2);
  \path (DispOne) +( 45:\widdisplay/2+0.2) node {{\scriptsize $1$}};
  \path (DispOne) +(135:\widdisplay/2+0.2) node {{\scriptsize $0$}};

  \draw[orange, thick] (S) -- (AIn);
  \fill[gray] (S) circle (\sourceradius);
  \path (S) node {{\scriptsize source}};
  \draw[orange, thick] (AOut) -- (DispOneIn);
\end{tikzpicture}
}
\newcommand{\FigMeasInteractionBeam}{
\begin{tikzpicture}[scale=1]
  \edef\widbox{2.4}
  \edef\heibox{1.2}

  \edef\widdisplay{1.2}
  \edef\displayneedle{0.4}
  \edef\lildisplayrad{0.1}
  \edef\displaymarkerlen{0.15}

  \edef\sourcedist{4}
  \edef\sourceradius{0.5}

  \edef\dispdist{4}
  \edef\dispmaxdy{1.2}

  \edef\nrays{9}
  \edef\nrayszero{3}
  \edef\nraysone{6}
  \edef\distrays{.07}

  \edef\finallenght{.7}
  \edef\finalsep{.02}
  \edef\finalbarhalflen{.04}
  \edef\finaltextsep{.7}

  \edef\ax{0}
  \edef\ay{0}

  \path (\ax,\ay) coordinate (A); 
  \path (A) +(-\widbox/2,0) coordinate (AIn);
  \path (A) +(+\widbox/2,0) coordinate (AOut);
  \path (-\sourcedist,0) coordinate (S);
  \path (A) +(+\dispdist,\dispmaxdy) coordinate (DispZero);
  \path (A) +(+\dispdist,-\dispmaxdy-\widdisplay/2)  coordinate (DispOne);
  \path (DispZero) +(-\widdisplay/2-0.1,\widdisplay/4) coordinate (DispZeroIn);
  \path (DispOne)  +(-\widdisplay/2-0.1,\widdisplay/4) coordinate (DispOneIn);
  \path (DispZero) +(+\widdisplay/2+0.1,\widdisplay/4) coordinate (FinalZI);
  \path (DispOne)  +(+\widdisplay/2+0.1,\widdisplay/4) coordinate (FinalOI);
  \path (FinalZI) + (\finallenght,0) coordinate (FinalZF);
  \path (FinalOI) + (\finallenght,0) coordinate (FinalOF);

  \shadedraw[thick, black, left color = black!20, right color = black!50]
    (\ax-\widbox/2,\ay-\heibox/2) -- (\ax+\widbox/2,\ay-\heibox/2) -- (\ax+\widbox/2,\ay+\heibox/2) -- (\ax-\widbox/2,\ay+\heibox/2) -- cycle;
  
    \path (A) node {$a$};
 
  \draw[thick] (DispZero) +(\widdisplay/2,0) arc (0:180:\widdisplay/2);
  \draw[thick] (DispZero) +(\widdisplay/2,0) -- +(-\widdisplay/2,0);
  
  \draw[black, thick] (DispZero) -- +(135:\displayneedle);
  \fill[black] (DispZero) +(180:\lildisplayrad) arc (180:0:\lildisplayrad);
  \draw[very thick, black] (DispZero) +( 45:\widdisplay/2-\displaymarkerlen) -- +( 45:\widdisplay/2);
  \draw[very thick, black] (DispZero) +(135:\widdisplay/2-\displaymarkerlen) -- +(135:\widdisplay/2);
  \path (DispZero) +( 45:\widdisplay/2+0.2) node {{\scriptsize $1$}};
  \path (DispZero) +(135:\widdisplay/2+0.2) node {{\scriptsize $0$}};
  
  \draw[thick] (DispOne) +(\widdisplay/2,0) arc (0:180:\widdisplay/2);
  \draw[thick] (DispOne) +(\widdisplay/2,0) -- +(-\widdisplay/2,0);
  
  \draw[black, thick] (DispOne) -- +(45:\displayneedle);
  \fill[black] (DispOne) +(180:\lildisplayrad) arc (180:0:\lildisplayrad);
  \draw[very thick, black] (DispOne) +( 45:\widdisplay/2-\displaymarkerlen) -- +( 45:\widdisplay/2);
  \draw[very thick, black] (DispOne) +(135:\widdisplay/2-\displaymarkerlen) -- +(135:\widdisplay/2);
  \path (DispOne) +( 45:\widdisplay/2+0.2) node {{\scriptsize $1$}};
  \path (DispOne) +(135:\widdisplay/2+0.2) node {{\scriptsize $0$}};

  \pgfmathsetmacro\halfsize{\distrays*(\nrays-1)/2}
  \path (S)    +(0,\halfsize) coordinate (STop);
  \path (AIn)  +(0,\halfsize) coordinate (AInTop);
  \path (AOut) +(0,\halfsize) coordinate (AOutTop);
  \foreach \jj in {1,...,\nrays}
  {
    \pgfmathsetmacro\dd{\distrays*(\jj-1)}
    \path (STop)   +(0,-\dd) coordinate (Sjj);
    \path (AInTop) +(0,-\dd) coordinate (AInjj);
    \draw[orange,thin] (Sjj) -- (AInjj);
  }

  \fill[gray] (S) circle (\sourceradius);
  \path (S) node {{\scriptsize source}};

  \pgfmathsetmacro\halfsizeZ{\distrays*(\nrayszero-1)/2}  
  \pgfmathsetmacro\halfsizeO{\distrays*(\nraysone-1)/2}  
  \path (DispZeroIn) +(0, \halfsizeZ) coordinate (DispZeroInTop);
  \path (DispOneIn)  +(0, \halfsizeO) coordinate (DispOneInTop);
  \path (FinalZI)    +(0, \halfsizeZ) coordinate (FinalZITop);
  \path (FinalOI)    +(0, \halfsizeO) coordinate (FinalOITop);
  \path (FinalZF)    +(0, \halfsizeZ) coordinate (FinalZFTop);
  \path (FinalOF)    +(0, \halfsizeO) coordinate (FinalOFTop);
  \path (FinalZF)    +(0,-\halfsizeZ) coordinate (FinalZFBot);
  \path (FinalOF)    +(0,-\halfsizeO) coordinate (FinalOFBot);
  \foreach \jj in {1,...,\nrayszero}
  {
    \pgfmathsetmacro\dd{\distrays*(\jj-1)}
    \path (AOutTop)       +(0,-\dd) coordinate (AOutjj);
    \path (DispZeroInTop) +(0,-\dd) coordinate (Dispjj);
    \draw[orange,very thin] (AOutjj) -- (Dispjj);
    \path (FinalZITop)+(0,-\dd) coordinate (FinalIjj);
    \path (FinalZFTop)+(0,-\dd) coordinate (FinalFjj);
    \draw[orange,very thin] (FinalIjj) -- (FinalFjj);
  }
  \foreach \jj in {1,...,\nraysone}
  {
    \pgfmathsetmacro\dd{\distrays*(\jj-1)}
    \pgfmathsetmacro\ddo{\distrays*(\nrayszero+\jj-1)}
    \path (AOutTop)       +(0,-\ddo) coordinate (AOutjj);
    \path (DispOneInTop)  +(0,-\dd) coordinate (Dispjj);
    \draw[orange,very thin] (AOutjj) -- (Dispjj);
    \path (FinalOITop)+(0,-\dd) coordinate (FinalIjj);
    \path (FinalOFTop)+(0,-\dd) coordinate (FinalFjj);
    \draw[orange,very thin] (FinalIjj) -- (FinalFjj);
  }

   \path (FinalOFTop) ++(\finalsep,0) coordinate (FinalBarOltop) ++(\finalbarhalflen,0) coordinate (FinalBarOmtop) ++(\finalbarhalflen,0) coordinate (FinalBarOrtop);
   \path (FinalOFBot) ++(\finalsep,0) coordinate (FinalBarOlbot) ++(\finalbarhalflen,0) coordinate (FinalBarOmbot) ++(\finalbarhalflen,0) coordinate (FinalBarOrbot);

   \path (FinalZFTop) ++(\finalsep,0) coordinate (FinalBarZltop) ++(\finalbarhalflen,0) coordinate (FinalBarZmtop) ++(\finalbarhalflen,0) coordinate (FinalBarZrtop);
   \path (FinalZFBot) ++(\finalsep,0) coordinate (FinalBarZlbot) ++(\finalbarhalflen,0) coordinate (FinalBarZmbot) ++(\finalbarhalflen,0) coordinate (FinalBarZrbot);

   \draw (FinalBarOltop) -- (FinalBarOrtop);
   \draw (FinalBarOlbot) -- (FinalBarOrbot);
   \draw (FinalBarOmtop) -- (FinalBarOmbot);
   \draw (FinalBarZltop) -- (FinalBarZrtop);
   \draw (FinalBarZlbot) -- (FinalBarZrbot);
   \draw (FinalBarZmtop) -- (FinalBarZmbot);

   \draw (FinalOF) +(\finaltextsep,0) node {{\scriptsize $I_1 = \sfrac{2}{3}$}};
   \draw (FinalZF) +(\finaltextsep,0) node {{\scriptsize $I_0 = \sfrac{1}{3}$}};

\end{tikzpicture}
}
\title{On monogamy of non-locality and macroscopic averages: examples and preliminary results}
\author{
Rui Soares Barbosa
\institute{Quantum Group \\ Department of Computer Science\\
University of Oxford
}
\email{rui.soares.barbosa@cs.ox.ac.uk}
}
\date{\today}
\begin{document}

\maketitle

\begin{abstract}
We explore a connection between monogamy of non-locality 
and a weak macroscopic locality condition: the locality of the average behaviour.
These are revealed by our analysis as being two sides of the same coin.

Moreover, we exhibit a structural reason for both 
in the case of Bell-type multipartite scenarios, shedding light on but also generalising the results in the literature
\cite{RamanathanEtAl:LocalRealismOfMacroscopicCorrelations,PawlowskiBrukner:MonogamyOfBellIneqsInNonsigTheories}.
More specifically, we show that,
provided the number of particles in each site is large enough compared to the number of allowed measurement settings,
and whatever the microscopic state of the system,
the macroscopic average behaviour is local realistic, or equivalently, general multipartite monogamy relations hold.

This result relies on a classical mathematical theorem by \Vorobev~\cite{vorobev}
about extending compatible families of probability distributions defined on the faces of a simplicial complex --
in the language of the sheaf-theoretic framework of Abramsky \& Brandenburger \cite{AbramskyBrandenburger},
such families correspond to no-signalling empirical models, and
the existence of an extension corresponds to locality or non-contextuality.
Since \Vorobev's theorem depends solely on the structure of the simplicial complex,
which encodes the compatibility of the measurements,
and not on the specific probability distributions (i.e. the empirical models),
our result about monogamy relations and locality of macroscopic averages
holds not just for quantum theory, but for any empirical model satisfying the no-signalling condition.

  In this extended abstract, 
  we illustrate our approach by working out a couple of examples,
  which convey the intuition behind our analysis
  while keeping the discussion at an elementary level.

  ~\\ \noindent
  \textbf{Keywords:} 
  monogamy of non-locality,
  macroscopic averages,
  Bell inequalities,
  no-signalling models,
  simplicial complexes,
  \Vorobev's theorem.
\end{abstract}


\section{Introduction}\label{sec:introduction}
Bell's theorem \cite{Bell-thm} showed that the quantum world is non-local:
the correlations between the outcomes of measurements on two entangled (space-like separated)
particles are too strong to be explainable by a common `local' cause.
The usual monogamy of non-locality relations impose a limit on the amount of non-locality shared
by one party with multiple other parties.
For example, in a tripartite ($A$, $B$ and $C$) system where each experimenter has two measurement settings available,
there is a trade-off between the strengths of violation
of a Bell inequality by the subsystem composed of $A$ and $B$ and the subsystem composed of $A$ and $C$.
More explicitly, for a bipartite Bell inequality $\mathcal{B}(-,-) \leq R$,
the added inequality $\mathcal{B}(A,B) + \mathcal{B}(A,C) \leq R + R$ holds,
even if each of $\mathcal{B}(A,B) \leq R$ and $\mathcal{B}(A,C) \leq R$ might be violated.

Ramanathan et al. \cite{RamanathanEtAl:LocalRealismOfMacroscopicCorrelations} consider multipartite
macroscopic systems, consisting of a large number of particles at each site, which are described
by quantum mechanics.
At each site, only `macroscopic' measurements are available:
e.g. magnetisation along some direction, which arises as a sort of average of the individual spin measurements in that direction for each particle in the site.
The authors are concerned only with the average behaviour
over all the microscopic particles -- this can
be obtained from the mean values of intensities measured macroscopically
(see Section \ref{ssec:macro-avg} for a more detailed explanation).
They show that, whatever the quantum state of the system
(so regardless of the form and strength of the entanglement between the particles),
and provided the number of particles at each site is large enough 
when compared to the number of different measurement settings available,
there is a local realistic explanation for these macroscopic average correlations.
The reason for such classicality 
is that non-local effects are diluted by averaging due to the restrictions imposed by monogamy.

However, monogamy holds more generally than just for quantum mechanics.
Paw{\l}owsky \& Brukner \cite{PawlowskiBrukner:MonogamyOfBellIneqsInNonsigTheories} show that all no-signalling theories
satisfy monogamy relations for the violation of any bipartite Bell-type inequality.
More specifically, given a general bipartite Bell inequality $\mathcal{B}(A,B) \leq R$,
they consider a scenario with one Alice, $A$, and $k$ independent copies of Bob, $B^{(1)}, \ldots, B^{(k)}$,
with $k$ equal to the number of measurement settings available to Bob.
They show that a monogamy relation for the bipartite inequality,
$\sum_{m=1}^{k}\mathcal{B}(A,B^{(m)}) \leq k R$, is satisfied by any no-signalling theory.

The methods used in the two above-cited papers are manifestly similar.
%
%
This observation, also made in \cite{RamanathanEtAl:LocalRealismOfMacroscopicCorrelations}, leads one to conjecture that
the results about local macroscopic averages also hold in general for any no-signalling theory.
We show that this turns out to be the case:
our investigation establishes a clear structural connection between the two papers
leading to a generalisation of the results of both.
Indeed, our main result for multipartite scenarios (Proposition \ref{prop:nkr-vorobev}) can be read in two ways:
on the one hand, it generalises the result of Paw{\l}owski \& Brukner \cite{PawlowskiBrukner:MonogamyOfBellIneqsInNonsigTheories},
concerning deriving monogamy relations from the no-signalling condition,
from bipartite to multipartite Bell inequalities with an arbitrary number of sites;
on the other hand, it generalises the result of Ramanathan et al. \cite{RamanathanEtAl:LocalRealismOfMacroscopicCorrelations},
about the classicality of macroscopic average behaviour in multipartite models,
from quantum models to all no-signalling models.
Let us spell out the consequences of Proposition \ref{prop:nkr-vorobev} from each of these perspectives.
\begin{itemize}
  \item Let $\mathcal{B}(A,B,C,\ldots) \leq R$ be a general
    Bell-type inequality over $n$ sites $A, B, C, \ldots$  with respectively
    $k_A, k_B, k_C, \ldots$ measurement settings available.
    Then, consider a scenario with a single copy of one of the sites, say $A$, and with
    $r_B$ copies of site $B$, $B^{(1)}, \ldots, B^{(r_B)}$,
    $r_C$ copies of site $C$, $C^{(1)}, \ldots, C^{(r_C)}$, etc. 
    The  monogamy relation for the satisfaction of the Bell inequality,
    \begin{equation}\label{eq:monogamymulti}
      \sum_{m_B=1}^{r_B}\sum_{m_C=1}^{r_C} \cdots \;\,\mathcal{B}(A,B^{(m_B)},C^{(m_C)}, \ldots) \leq r_Br_C \cdots R \,\text{ ,}
    \end{equation}
    is satisfied by any no-signalling model if and only if the number of copies of each site is at least the number of measurement settings at that site; i.e. $r_B \geq k_B$, $r_C \geq k_C$, etc.
    Reference \cite{PawlowskiBrukner:MonogamyOfBellIneqsInNonsigTheories} addressed the particular case $n=2$
    for which it proved the `if' side of this result when $\vec{r} = \vec{k}$.
  \item From the other perspective, suppose that we have a scenario with $n$ `macroscopic' sites $A, B, C, \ldots$
    with respectively $k_A, k_B, k_C, \ldots$ measurement settings available,
    and suppose that each of these macroscopic sites is constituted by a number $r_i \; (i \in \enset{A, B, C, \ldots})$ of
    microscopic sites.
    We assume that each of the $k_i$ measurement settings available at site $i$
    corresponds to performing a similar measurement on all the microscopic sites (say, particles) that constitute it:
    the expected value of the macroscopic measurement then tells us the average behaviour among all the microscopic sites
    (see Section \ref{ssec:macro-avg} for more details).
    We show that
    if the number of microscopic sites forming each macroscopic site is at least the number of measurement settings at that site,
    i.e. if $r_i \geq k_i$ for all sites $i \in \enset{A, B, C, \ldots}$, 
    then any no-signalling empirical model on a microscopic scenario
    has local average macroscopic behaviour.
    Reference  \cite{RamanathanEtAl:LocalRealismOfMacroscopicCorrelations} proved this result, 
    but restricted to the case of quantum mechanical correlations.
    We show that having local macroscopic averages is not a particular property
    of quantum mechanics distinguishing it from super-quantum correlations.
    Note that this is not to say that we cannot distinguish them by other, more refined notions of macroscopic locality (cf. e.g. \cite{NavascuesWunderlich2009}, and see Section \ref{ssec:macro-avg} for a discussion).
\end{itemize}
Moreover,
it becomes apparent that the two items above are essentially two ways of looking at the same thing.


More important perhaps than these generalisations is that our analysis highlights
the \emph{structural reason} why these results hold.
This is related to a characterisation due to \Vorobev\ of the measurement scenarios
that are inherently local or non-contextual.
The idea is that quotienting a large scenario
by the identification
(of sites that are `copies' or instances of the same site,
or of microscopic sites forming a single macroscopic site)
\[A^{(1)} \sim \cdots \sim A^{(r_A)} \quad\quad
  B^{(1)} \sim \cdots \sim B^{(r_B)} \quad\quad
  C^{(1)} \sim \cdots \sim C^{(r_C)} \quad\quad
  \ldots  \quad \Mcomma\]
along which one considers the monogamy relation or takes the average,
yields such an inherently local scenario.
Hence, the model obtained by averaging along the symmetry,
being defined on this quotient scenario, must be local.
This also implies that the original model satisfies all monogamy relations
for this symmetry, which are simply the invariant Bell inequalities.


Another important aspect of this work is its potential for further generalisation,
as indicated in Section \ref{sec:conclusions}. For example, the same ideas can potentially be applied
to yield monogamy relations for violation of contextuality inequalities,
or to study macroscopic averages in more general scenarios as well. 

The central aim of this extended abstract is to convey the intuition behind this structural proof.
We mainly focus on showing a couple of simple examples whose geometric realisations can be easily visualised.
We try to keep the presentation at an elementary level, ignoring some of the more involved technical details,
rather to focus on the central ideas and intuitions. 
A longer version of this work,
  containing all the technical details and full proofs, as well as presenting things in greater generality,
  is under preparation.

\paragraph*{Outline.}
Section \ref{sec:setting} gives a quick overview of the main ingredients
of the sheaf-theoretic approach.
In Section \ref{sec:relating-firstexample}, after a discussion
that clarifies the meaning of average macroscopic behaviour
and compares it to other notions discussed in the literature,
we observe a connection between 
such averages and monogamy of non-locality,
in the simplest (tripartite) scenario where the latter arises, in its most familiar form.
Section \ref{sec:structuralexplanation} presents \Vorobev's theorem,
and illustrates, using the motivating example,
how it can provide a structural explanation for
monogamy relations and local macroscopic averages,
due to the acyclicity of a certain quotient complex.
We also consider another tripartite example, with more measurement settings, 
for which the quotient is not acyclic, and so the explanation above does not apply:
monogamy relations may fail to hold and macroscopic averages fail to be classical.
Both examples are particular cases of the general multipartite scenarios that are considered in Section \ref{sec:gen-multipartite},
where we present (without proof) the complete characterisation of those whose quotients are acyclic (Proposition \ref{prop:nkr-vorobev}), 
which yields the generalisations of the results of the two papers mentioned above.
Finally, Section \ref{sec:conclusions} concludes with a summary and an outlook.

\newpage

\section{Measurement scenarios and empirical models}\label{sec:setting}
We quickly summarise some of the basic ideas of
the sheaf-theoretic framework of Abramsky \& Brandenburger \cite{AbramskyBrandenburger},
which provides a unified treatment of non-locality and contextuality in the general setting of no-signalling probabilistic models.
For the purpose of the present document, we shall be mainly concerned with non-locality.
Still, the geometric structures of this framework provide an
appropriate setting in which to understand -- and visualise -- monogamy and macroscopic averages.
For some other interesting results stemming from this sheaf-theoretic approach to non-locality and contextuality,
the reader is referred to \cite{AbramskyHardy:LogicalBellIneqs,AbramskyMansfieldBarbosa:Cohomology-QPL,AbramskyConstantin:ClassificationMultipartiteStates,MansfieldBarbosa:QPL2013,Abramsky12:databases,Mansfield:DPhil-thesis}.

\subsection{Measurement scenarios}\label{ssec:meas-scenarios}
A \emph{measurement scenario}
is given by an abstract simplicial complex $\Sigma$ on the (finite) set $X$ of allowed measurements\footnote{An
abstract simplicial complex on a set (of vertices) $X$ is a family of subsets of $X$, called faces,
that is downwards-closed and contains
all the singletons $\enset{x}$, $x \in X$. 
This is interpreted as a combinatorial description of a geometrical object
given as a collage of points (the singletons), line segments (sets of two elements), triangles (sets of three elements),
and their higher-dimensional counterparts.}
(or equivalently, by a cover $\UU$ of $X$: the corresponding simplicial complex is obtained by down closure;
and conversely, the maximal faces of a simplicial complex form a cover of $X$).
Each face of the complex is called a measurement context.
The intuition is that measurements in the same context can be performed together.
Examples include multipartite Bell-type scenarios, Kochen--Specker configurations, and more.

Let us take as an example the simplest scenario in which monogamy relations arise, in their most familiar form.
We consider a Bell-type scenario with
three sites ($A$, $B$ and $C$) and two possible measurement settings available to the experimenter at each site
($a_1$ and $a_2$ for $A$, $b_1$ and $b_2$ for $B$, and $c_1$ and $c_2$ for $C$).
As usual, the choice of measurement at each site can be made independently of the other sites.
Formally, the set of available measurements is $X = \enset{a_1, a_2, b_1, b_2, c_1, c_2}$
and the cover of maximal contexts is
\[\UU =  \setdef{ \enset{a_i,b_j,c_k} } { i,j,k \in \enset{1,2} } \Mdot\]
The corresponding simplicial complex is a \stress{hollow} octahedron, depicted below:
\begin{center}
\begin{tikzpicture}[scale=3]
  \path (-1 , 0 ) coordinate (A1);
  \path ( 1 , 0 ) coordinate (A2);
  \path (-.2,-.5) coordinate (B1);
  \path (+.2,+.5) coordinate (B2);
  \path (-.2,+.2) coordinate (C1);
  \path (+.2,-.2) coordinate (C2);

  \begin{scope}[opacity=0.5, color=gray]
    \fill (A1) -- (B1) -- (C1) -- cycle;
    \fill (A1) -- (B1) -- (C2) -- cycle;
    \fill (A1) -- (B2) -- (C1) -- cycle;
    \fill (A1) -- (B2) -- (C2) -- cycle;
    \fill (A2) -- (B1) -- (C1) -- cycle;
    \fill (A2) -- (B1) -- (C2) -- cycle;
    \fill (A2) -- (B2) -- (C1) -- cycle;
    \fill (A2) -- (B2) -- (C2) -- cycle;
  \end{scope}

  \begin{scope}[ thick,black]
    \draw (A1) -- (B1) -- (C1) -- cycle;
    \draw (A1) -- (B1) -- (C2) -- cycle;
    \draw (A1) -- (B2) -- (C1) -- cycle;
    \draw (A1) -- (B2) -- (C2) -- cycle;
    \draw (A2) -- (B1) -- (C1) -- cycle;
    \draw (A2) -- (B1) -- (C2) -- cycle;
    \draw (A2) -- (B2) -- (C1) -- cycle;
    \draw (A2) -- (B2) -- (C2) -- cycle;
  \end{scope}
 
  \fill[red]   (A1) circle(.03) ;
  \fill[red]   (A2) circle(.03) ;
  \fill[green] (B1) circle(.03) ;
  \fill[green] (B2) circle(.03) ;
  \fill[blue]  (C1) circle(.03) ;
  \fill[blue]  (C2) circle(.03) ;
  
  \path (A1) +(-.1 ,0  ) node {$a_1$}; 
  \path (A2) +(+.1 ,0  ) node {$a_2$};
  \path (B1) +(0   ,-.1) node {$b_1$};
  \path (B2) +(0   ,+.1) node {$b_2$};
  \path (C1) +(-.06,-.08) node {$c_1$};
  \path (C2) +(-.06,+.08) node {$c_2$};
\end{tikzpicture}
\end{center}
Note that this complex can be described in a more compositional way as
\[\Disj_2^{\sjoin 3} = \Disj_2 \sjoin \Disj_2 \sjoin \Disj_2 \Mcomma\]
where 
$\Disj_2$ is the discrete simplicial complex on two vertices\footnote{The discrete complex on $n$ vertices, $\Disj_n$,
is the minimal simplicial complex on $n$ vertices, containing no faces of dimension higher than $0$ (lines, triangles, etc.).
Formally, $\Disj_n := \enset{\emptyset, \enset{1}, \ldots, \enset{n}}$.}, corresponding to the scenario available to each experimenter,
and $\sjoin$ stands for the simplicial join operation\footnote{Given simplicial complexes
$\Sigma_1$ and $\Sigma_2$ on vertex sets $X_1$ and $X_2$ respectively,
their simplicial join is defined on the vertex set $X_1 \sqcup X_2$
as 
$\Sigma_1 \sjoin \Sigma_2 := \setdef{\sigma_1 \sqcup \sigma_2}{\sigma_1 \in \Sigma_1, \sigma_2 \in \Sigma_2} = \setdef{\sigma \subseteq X_1 \sqcup X_2}{\sigma \cap X_1 \in \Sigma_1 \;\land\; \sigma \cap X_2 \in \Sigma_2}$.
}, which captures parallel composition of scenarios.
For more details on this, see e.g. \cite{RSB:DPhil-thesis-forth}.

In the explicit examples of empirical models in the rest of this text,
we shall take all measurements to have two possible outcomes: $0$ and $1$.
This is irrelevant as none of the results we consider is sensitive to the sets of outcomes,
as long as there are at least two outcomes per measurement.
With this extra assumption, the scenario $\Disj_2^{\sjoin 3}$ above is also customarily known as the
$(3,2,2)$ scenario: the numbers stand for 3 sites, 2 measurement settings at each site, and 2 outcomes for each measurement.

\subsection{Empirical models and extendability}\label{ssec:empirical-models}
While a measurement scenario is an abstract description of a set of possible experiments,
empirical models represent particular (real or hypothetical)
probabilistic results of these experiments
(one can think of frequencies tabulated from runs of the experiments on ensembles of identically prepared systems).

Given a measurement scenario $\UU$, 
an \emph{empirical model}
is a compatible family of probability distributions $(\mu_C)_{C \in \UU}$,
where each $\mu_C$ is a distribution on joint outcomes of the measurements in context $C$.
Compatibility here means that $\mu_C$ and $\mu_{C'}$ marginalise to the same distribution on 
outcomes of measurements in $C \cap C'$.
In the case of multipartite scenarios, this corresponds to the usual \emph{no-signalling} condition.

For such an empirical model, we are concerned with the existence of a global probability distribution
$\mu_X$ on the joint outcomes of all the measurements that marginalises to all the distributions $\mu_C$.
It is shown by Abramsky \& Brandenburger \cite{AbramskyBrandenburger}
that such a global extension exists iff the model admits a non-contextual
(or local, in the particular case of multipartite scenarios) hidden variable explanation.
So, the set of joint outcomes of all measurements ($\enset{0,1}^X$ in the case of dichotomic measurements)
can be seen as a canonical hidden variable space.
Obstructions to such extensions are witnessed by violations of Bell-type inequalities by the probability distributions $\mu_C$
(cf. \cite{AbramskyHardy:LogicalBellIneqs} for a general scheme, based on logical consistency conditions, for
deriving complete sets of Bell-type inequalities on any measurement scenario).

Let us consider some examples.
Take the tripartite scenario from Section \ref{ssec:meas-scenarios}.
An empirical model for this scenario is a collection of no-signalling
probabilities of the form $p(a_i,b_j,c_k = x, y, z)$,
where $x$, $y$, $z$ range over the possible outcomes of the respective measurements.
An example of a valid empirical model is represented in the following table --
this is the model obtained
by preparing a 3-qubit system in the W state
and allowing $Z$ and $X$ measurements at each site (on each qubit).
\begin{equation}\label{extable1}
\ABCTableWstateZXmeas
\end{equation}
Another example is the super-quantum tripartite box known as Svetlichny box \cite{BarrettEtAl2005:NonlocalCorrelationsInfoResource}:
\begin{equation}\label{extable2}
\ABCTableSvetlichny
\end{equation}
Finally, let us also consider a non-symmetric example (with respect to $B$ and $C$):
\begin{equation}\label{extable3}
\ABCTableOtherNonSym
\end{equation}
All three examples above are non-local.

\section{Relating monogamy and macroscopic averages: a first example}\label{sec:relating-firstexample}
\subsection{Macroscopic average behaviour}\label{ssec:macro-avg}
We start by clarifying
what we mean by macroscopic average behaviour.
This is the same as the macroscopic correlations considered in
Ramanathan et al. \cite{RamanathanEtAl:LocalRealismOfMacroscopicCorrelations}.
We also contrast it
with the notion of macroscopic locality suggested in Navascu\'es \& Wunderlich \cite{NavascuesWunderlich2009} and Bancal et al. \cite{BancalEtAl2008}.

In order to understand the averaging process,
we first consider single-site measurements.
Let us take a (microscopic) measurement with $l$ possible outcomes.
One can imagine that, in such a measurement, a \stress{single} particle is subjected to an interaction $a$,
a measurement process of some sort, resulting in the particle colliding with one of the
$l$ detectors corresponding to the measurement outcomes (Figure \ref{fig:measinteractionsingle}).
The nature of this interaction
might be probabilistic; repeating the experiment many times on identically prepared systems
allows us to collect statistical data $p(x \mid a)$ or $p(a = x)$ (with $x \in \enset{0, \ldots, l-1}$)
corresponding to the probability of the detector $x$ being clicked given that one has decided to measure $a$.

\begin{figure}
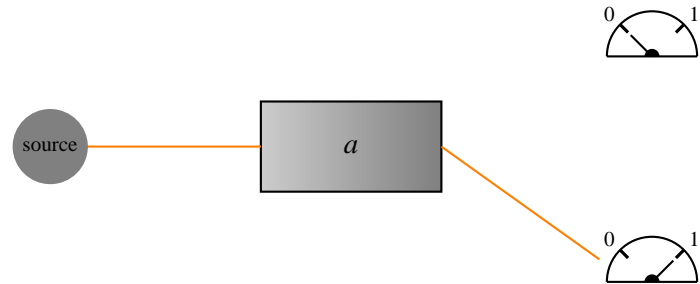

  \begin{center}
  \FigMeasInteractionSingle
  \end{center}
  \caption{Dichotomic measurement, $a$, performed on a single particle, which ends up hitting the detector corresponding to one of the possible outcomes (in this case, outcome $1$).}\label{fig:measinteractionsingle}
\end{figure}

Now we introduce a change to this setup.
In a macroscopic experiment,
the experimenter
receives a beam of $N$ particles instead of
a single particle. The same interaction is applied
(simultaneously)
to all the particles in the beam,
dividing it into smaller beams that collide with each of the detectors $0, \ldots, l-1$.
The information one can obtain from such an experiment is the number of particles that collide with each detector,
i.e. the intensity of each of the smaller resulting beams (Figure \ref{fig:measinteractionbeam}).
Note that instead of beams of photons, we could also think of regions of a magnetic material
where measuring the magnetisation in a certain direction corresponds to making a spin measurement
on all the $N$ particles in the region.
The details are not essential to the discussion, so we shall keep talking mostly about `beams'.

\begin{figure}
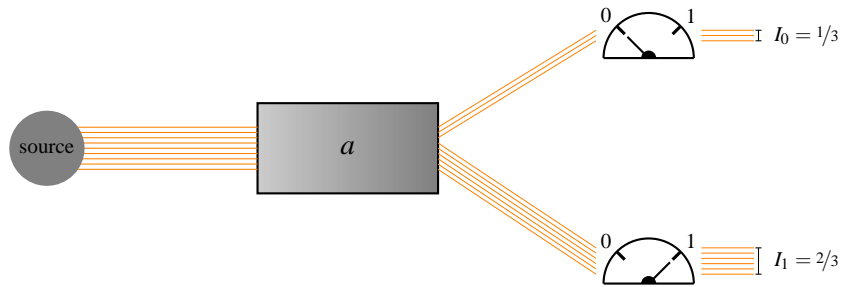

  \begin{center}
  \FigMeasInteractionBeam
  \end{center}
  \caption{Dichotomic measurement, $a$, performed on a \stress{beam} of particles, which may have been prepared in a large entangled state.
  This beam is split into two smaller beams, which hit each of the detectors corresponding to the possible outcomes, and whose intensities are recorded revealing the \stress{average behaviour} of a particle in the beam.}\label{fig:measinteractionbeam}
\end{figure}

In order to simplify the discussion, but with no crucial loss, we assume that the microscopic measurements are dichotomic,
i.e. $l = 2$, and take the possible outcomes to be $0$ and $1$. Then, the result of the macroscopic measurement (an intensity)
can be represented by a single number $I_1$ proportional to the number of particles that hit the detector corresponding to outcome $1$. 
(Note that this number can be normalised to yield a number in $[0,1]$ representing the proportion of particles that hit detector $1$.)

Of course, given the probabilistic nature of the microscopic measurements,
every time the whole experiment is run, with the same preparation of the initial state of the beam,
the number of particles hitting detector $1$ differs slightly.
But if $N$ is large enough, a realistic detector won't be able to discern, and so count, individual particles. 
For the purpose of this paper, 
we are concerned only with the mean, or expected, value of these intensities.
This can easily be obtained from such a macroscopic experiment, and it is what is usually taken to be the
\stress{value} of the macroscopic observable (e.g. total magnetisation).
This mean intensity can also be interpreted as giving the average behaviour among the particles in the beam or region:
if one would randomly select one of the $N$ particles and subject it to the microscopic measurement,
one would get the outcome $1$ with probability $I_1$ (assuming the intensities are normalised as mentioned above);
i.e. $I_1 = \frac{1}{N}\sum_{i=1}^N p_i(a = 1)$.
Observe that the situation is analogous to statistical mechanics, where
a macrostate arises as an averaging over an extremely large number of microstates,
and hence several different microstates can correspond to the same macrostate.

We, like Ramanathan et al. \cite{RamanathanEtAl:LocalRealismOfMacroscopicCorrelations},
are interested in average multipartite correlations arising from macroscopic measurements of this kind done across different sites.
We can think of a beam of photons being sent to each of a number of experimenters in spatially separated locations who can make several different measurements,
or we can think (as suggested in \cite{RamanathanEtAl:LocalRealismOfMacroscopicCorrelations})
of magnetisation measurements along several different directions done in a number of regions of a many-spin system.
In any case, we have a sort of average macroscopic Bell experiment:
the (mean) values of the macroscopic intensities (the intensity of an outcome $\tuple{x_1, \ldots, x_n}$ of a multipartite macroscopic measurement is the product of the intensities of the outcome $x_i$ for each site $i$) indicate the behaviour of a randomly chosen tuple of particles: one from each of the beams, or sites.
We shall show, as a consequence of Proposition \ref{prop:nkr-vorobev}, that,
as long as there are enough particles (microscopic sites) in each macroscopic site when compared to the number of possible measurement settings the experimenter at that site can choose from,
such average macroscopic behaviour is always local
no matter which no-signalling theory accounts for the 
underlying microscopic correlations.

We should mention how this relates to discussions about macroscopic locality in the literature.
In Bancal et al. \cite{BancalEtAl2008}, a bipartite setup similar to the one described above is considered.  
One difference is that the authors assume that the pair of beams received by the experimenters
is composed of independently and identically prepared pairs of particles.
This is also the case in Navascu\'es and Wunderlich \cite{NavascuesWunderlich2009}
(see the caption of figure 2 of this reference).
A run of their macroscopic experiment can be seen as
running the same microscopic bipartite Bell experiment multiple times and recording only how many times one obtains a certain outcome,
disregarding the information of which particles from each of the beams were originally paired. 
The mean values of the intensities (or, equivalently, the behaviour of a random pair) that we described above
are rather boring in this situation with identically prepared pairs: what we get is a diluted version of the probabilities for one of the identical microscopic empirical models\footnote{ Suppose that
$N$ pairs $\{\tuple{a^{(m)},b^{(m)}}\}_{m = 1, \dots, N}$ are prepared
each having the same empirical model,
described by probability distributions $p(a,b = x,y)$ where $a$ and $b$ range over
the possible measurement settings on each of the sites, and $x$ and $y$ over the respective outcomes.
Then the average behaviour of an (arbitrary) pair is given by:
\begin{align*}
  \tilde{p}(a,b = x,y) & = \frac{1}{N^2} \sum_{m_A,m_B = 1}^{N} p(a^{(m_A)},b^{(m_B)}=x,y)
\\ & = \frac{1}{N^2}\sum_{m=1}^Np(a^{(m)},b^{(m)}=x,y) + \frac{1}{N^2}\sum_{m_A \neq m_B = 1}^N p(a^{(m_A)},b^{(m_B)}=x,y)
\\ & = \frac{N}{N^2} p(a,b=x,y) + \frac{1}{N^2}\sum_{m_A \neq m_B = 1}^N p(a^{(m_A)} = x) p(b^{(m_B)} = y) 
\\ & = \frac{1}{N} p(a,b=x,y) + \frac{N^2 - N}{N^2}p(a = x) p(b = y) 
\\ & = \frac{1}{N} p(a,b=x,y) + \left(1-\frac{1}{N}\right)p(a = x) p(b = y) 
\end{align*}
which is the initial microscopic model very `diluted' by a local model
(corresponding to the pairs that were not prepared originally as a pair,
and are thus uncorrelated).
}.
However, these authors are not simply interested in the mean values of these intensities
(each of which is, after normalisation, a value in the interval $[0,1]$), but rather on the
more fundamental probability distributions over $[0,1]$ 
from which these means are calculated -- 
recall that each time the macroscopic experiment is run,
one measures slightly different intensities,
so the observed intensities fall in a distribution around the mean value.
Their aim is to witness non-locality on the fluctuations.
More specifically, they are concerned with the question of
whether these distributions of intensities
can be explained by local hidden variable models.

Even though some information is inevitably lost in such an experiment (particularly regarding the original pairings),
Bancal et al. \cite{BancalEtAl2008} show that one can witness non-locality at this level
if the detectors are perfect, in the sense that they can measure the intensity of beams
with maximal precision, to a sensitivity of one particle. This clearly becomes impractical as $N$ grows.
At the opposite end, in the idealised limit where the resolution of the detectors
is very bad, one would always observe the same intensities:
namely, our mean value of intensities with no fluctuations around it.
Navascu\'es \& Wunderlich \cite{NavascuesWunderlich2009}
suggest that it is physically reasonable that, when $N$ is large,
one could detect changes on intensity values of the order of $\sqrt N$.
These authors propose the notion of macroscopic locality to mean that
the distributions of observed intensities with a resolution of order $\sqrt N$ admit a local hidden variable explanation.
They show that this principle of macroscopic locality is satisfied by quantum mechanics,
but is not valid in general for all no-signalling theories: more accurately, the set of correlations
satisfying it is $Q^1$, the first level of the hierarchy of semidefinite programs
approximating the quantum set
proposed by Navascu\'es, Pironio, \& Ac\'in \cite{NavascuesPironioAcin2008}.

In this sense, the kind of macroscopic correlations we
(and Ramanathan et al. \cite{RamanathanEtAl:LocalRealismOfMacroscopicCorrelations}) consider
seems more restricted, since we show that these are local no matter which no-signalling theory accounts for the
underlying microscopic correlations.
However, there are some important differences, which we now summarise:
\begin{itemize}
  \item Firstly, we do not consider the beams to consist of
identically prepared pairs (or tuples) of particles.
In the bipartite setting of \cite{BancalEtAl2008,NavascuesWunderlich2009} above,
pairs of particles were identically and independently prepared and then a particle of each
pair was sent to Alice and the corresponding one to Bob (although the pairing
is lost as the particles are lumped together in a beam). In our setting,
the particles may be in different states and there are no restrictions on which groups of particles of Alice and of Bob (and possibly of others, as we allow for an arbitrary number of sites)
are entangled -- the `microstate' of the system can be very highly non-local.
The only restriction we impose
is that of no-signalling.
 \item Secondly,
our aim is not to explain the distribution of the intensities, with their fluctuations around the mean value,
by a local model, as in \cite{NavascuesWunderlich2009,BancalEtAl2008}. 
Rather, the (products of the) mean intensities themselves, which are taken as the value of the macroscopic observable,
give us a description of the behaviour of the average pair or tuple of particles in the beams. It is this average behaviour that we aim to explain by a local model. We prove this is indeed always possible for any no-signalling microscopic theory provided there are enough particles compared to measurement settings available at each site. The reason, again,  has to do with monogamy, which dilutes non-locality.
\end{itemize}

Despite the latter difference, note that
since such average behaviour corresponds to the mean, or expected, values of the
macroscopic measurements of intensities considered as in \cite{NavascuesWunderlich2009,BancalEtAl2008},
our result also implies
that macroscopic CHSH-type inequalities, i.e. inequalities involving only the expected values of macroscopic experiments,
can never be violated by no-signalling microscopic theories.
It is only by looking at higher-order moments (which correspond to other characteristics of the distribution, such as variance, skewness, kurtosis, etc.)
that one may witness a difference between quantum mechanics and general no-signalling theories.


\subsection{Macroscopic average behaviour: examples}\label{ssec:macro-avg-eg}
Let us see how this averaging works for our tripartite example.
We regard sites $B$ and $C$ as forming one macroscopic site, $M$,
and site $A$ as forming another\footnote{In this example, `macroscopic' means
one or two microsystems only, allowing us to keep the example small enough to be visualised.
This is sufficient to get local averages given that we are only considering two measurement settings per site:
recall from Section \ref{sec:introduction} that the condition
is that the number of microsystems in a site, or copies of a site, should be at least the number of
measurement settings available at that site
(except possibly for one of the sites, $A$ in this example, where we can consider a single microsystem or copy).}.
The idea is that we will average over the behaviour of the microsystems $B$ and $C$.
In order to be lumped together, $B$ and $C$ must be symmetric, i.e. of the same `type'.
In particular, we need to know which measurements on the site $B$ correspond to which measurements on the site $C$.
Here, we consider a symmetry of the system which makes the identifications
$b_1 \sim c_1$ and $b_2 \sim c_2$. We will name $m_1$ and $m_2$ the `macroscopic' measurements resulting from these identifications.

Given  an empirical model on the tripartite scenario,
one can consider the partial model on the subsystem composed of sites $A$ and $B$ only, whose probabilities are given
by marginalisation (in quantum mechanics, this corresponds to partial trace):
\[p(a_i,b_j = x, y ) := \sum_{z} p(a_i,b_j,c_k = x, y, z) \Mdot\]
Note that this expression is independent of $c_k$ due to no-signalling.
Similarly, one can consider the partial model on the subsystem composed of $A$ and $C$ only.

The average behaviour under the identification of $B$ with $C$ is then a bipartite model with two `macroscopic' sites $A$ and $M$,
given as an average of probability distributions of the partial models:
\begin{equation}\label{eq:macrodef-tripartite}
  p(a_i,m_j = x, y) := \frac{p(a_i,b_j = x,y) + p(a_i,c_j = x, y)}{2} \text{ .}
\end{equation}

Let us see how such average models look for the particular empirical models
\eqref{extable1}--\eqref{extable3} from Section \ref{ssec:empirical-models}.
The first two examples are symmetric with respect to $B$ and $C$, meaning that
the restriction of the model to sites $A$ and $B$ is the same as
the restriction of the model to sites $A$ and $C$.
Consequently, it is also equal to the macroscopic average model,
since the latter arises as an average of the two partial models.
The table for the macroscopic model emerging from example \eqref{extable1} is:
\[
\AMTableWstateZXmeasQUOT
\]
and for example \eqref{extable2} we obtain the totally mixed model:
\[
\AMTableTotalMix
\]
Note that both these (macroscopic average) bipartite models are local.
Now let us consider example \eqref{extable3}. The partial models on sites
$A$ and $B$ and on sites $A$ and $C$ are represented in the following tables:
\[
\ABTablePRmodel
\qquad\qquad
\ACTableTotalMix
\]
Note that the model in the left is non-local
(even maximally violating a Bell inequality: it is a Popescu--Rohrlich box \cite{PR-boxes}),
while the one in the right is local (it is the totally mixed model, in fact).
The  `macroscopic' average model is obtained as an average of these two:
\[
\AMTableOtherNonSymQUOT
\]
This model is also local, like the other average models above: a global probability distribution
for this model is
\begin{align*}
   &\tfrac{1}{8}[a_1a_2m_1m_2 = 0000]
  + \tfrac{1}{8}[a_1a_2m_1m_2 = 0001]
\\+&\tfrac{1}{8}[a_1a_2m_1m_2 = 0100]
  + \tfrac{1}{8}[a_1a_2m_1m_2 = 0110]
\\+&\tfrac{1}{8}[a_1a_2m_1m_2 = 1001]
  + \tfrac{1}{8}[a_1a_2m_1m_2 = 1011]
\\+&\tfrac{1}{8}[a_1a_2m_1m_2 = 1110]
  + \tfrac{1}{8}[a_1a_2m_1m_2 = 1111] \Mdot
\end{align*}
We shall see that these three examples are in no way special. Indeed,
our analysis will clarify that the macroscopic average behaviour is local
no matter which no-signalling tripartite empirical model we start from.

\subsection{Monogamy and macroscopic averages: Bell inequalities}\label{ssec:Bellineqs}
Now, we make a very simple observation that establishes the connection between monogamy of non-locality and
locality of these macroscopic averages.
Consider any Bell inequality $\mathcal{B}(-,-) \leq R$ for a scenario with two parties,
each with two available measurements.
Such an inequality is determined by a set of coefficients $\alpha(i,j,x,y)$ and a bound $R$. We have that:
\begin{calculation}
\mathcal{B}(A,M) \leq R
\ejust\bimplies
\sum_{i,j,x,y}\alpha(i,j,x,y)p(a_i,m_j=x,y) \leq R
\just\bimplies{definition of the average probabilities, eq. \ref{eq:macrodef-tripartite}}
 \sum_{i,j,x,y}\alpha(i,j,x,y)\;\frac{p(a_i,b_j=x,y)+p(a_i,c_j=x,y)}{2} \leq R
 \just\bimplies{re-arranging terms}
\sum_{i,j,x,y}\alpha(i,j,x,y)p(a_i,b_j=x,y)  +  \sum_{i,j,x,y}\alpha(i,j,x,y)p(a_i,c_j=x,y) \leq 2 R
\ejust\bimplies
\mathcal{B}(A,B) + \mathcal{B}(A,C) \leq 2 R
\end{calculation}
That is, the `macroscopic' average model, $p(a_i,m_j=\cdots)$ on sites $A$ and $M$,
satisfies the Bell inequality, $\mathcal{B}(A,M) \leq R$,  if and only if
the `microscopic' model (on sites $A$, $B$ and $C$) is monogamous with respect to violating it;
i.e. the bipartite
partial models $p(a_i,b_j=\cdots)$ and $p(a_i,c_j=\cdots)$
satisfy the monogamy relation $\mathcal{B}(A,B) + \mathcal{B}(A,C) \leq 2 R$,
and so cannot both violate the Bell inequality.
This is an instance of a more general equivalence between Bell inequalities on `macroscopic' averages
and the monogamy of violation of the same inequality at the `microscopic' level. 
As a consequence, a macroscopic average model satisfies all Bell inequalities (i.e. it is local) if and only if the microscopic model is monogamous with respect to violating all those inequalities. This is the case, in particular, of all the models in the tripartite scenario we are analysing, such as the examples considered in Section \ref{ssec:macro-avg-eg}.
In the next section, we give a reason for this based on the structure of the scenario.

\section{A structural explanation}\label{sec:structuralexplanation}
\subsection{\Vorobev's theorem}
A classical mathematical result due to \Vorobev~\cite{vorobev}, and motivated by a problem in game theory,
deals with the following question, here rephrased in our terms:
for which measurement scenarios $\UU$ (or $\Sigma$) is it so
that any no-signalling empirical model 
$(\mu_C)_{C \in \UU}$ defined on it
admits a global extension, i.e. is local or non-contextual?
\Vorobev\ derived a necessary and sufficient condition on the simplicial complex $\Sigma$
for this to be the case.
We present a simplified yet equivalent version of \Vorobev's condition,
which happens to be known in relational database theory as acyclicity, an important property of database schemata
(cf. \cite{Abramsky12:databases,RSB:DPhil-thesis-forth} for more on the connection between relational database theory and the study of locality and non-contextuality).
The idea is that such a scenario can be constructed by adding one measurement at a time
in such a way that the new measurement is added to only one maximal context.
Equivalently, it can be de-constructed by removing at each step a measurement belonging to a single maximal context.

\begin{definition}\label{def:acyclicity}
  Let $\Sigma$ be a simplicial complex.
  Given a maximal face $C$, let $\proper{C}$ denote the vertices of $\Sigma$ which belong to $C$ and not to any other maximal faces.
  If $\proper{C} \neq \emptyset$ for some $C$, then we say that there is a \emph{Graham-reduction} step from $\Sigma$ to the subcomplex
  \[\Sigma' :=  \setdef{\sigma \in \Sigma}{\sigma \cap \proper{C} = \emptyset} = \setdef{\sigma \,\setminus\, \proper{C}}{\sigma \in \Sigma}\]
  and write $\Sigma \leadsto \Sigma'$.

  The complex $\Sigma$ is said to be \emph{acyclic} if
  it is Graham-reducible to the empty complex\footnote{The empty complex, $\mathbf{0}$ is the only simplicial complex on $\emptyset$, that is, with no vertices.},
  i.e.
  if there exists a series of Graham-reduction steps from $\Sigma$ to the empty complex:
  \[\Sigma = \Sigma_0 \leadsto \Sigma_1 \leadsto \cdots \leadsto \Sigma_r = \mathbf{0} \Mdot\]
 \end{definition}

The following is an example of a successful Graham reduction to $\mathbf{0}$, witnessing the acyclicity of the
simplicial complex on the left.
\begin{center}
  \begin{tikzpicture}[scale=1]
  \path (-1 , 0 ) coordinate (A);
  \path (-.5, 1 ) coordinate (B);
  \path ( 0 , 0 ) coordinate (C);
  \path (+.5, 1 ) coordinate (D);
  \path (+1 , 0 ) coordinate (E);

  \begin{scope}[opacity=0.5, color=gray]
    \fill (A) -- (B) -- (C) -- cycle;
    \fill (B) -- (C) -- (D) -- cycle;
    \fill (C) -- (D) -- (E) -- cycle;
  \end{scope}

  \begin{scope}[ thick,black]
    \draw (A) -- (B) -- (C) -- cycle;
    \draw (B) -- (C) -- (D) -- cycle;
    \draw (C) -- (D) -- (E) -- cycle;
  \end{scope}
 
  \fill[green]    (A) circle(.12) ;
  \fill[black]  (B) circle(.12) ;
  \fill[black]  (C) circle(.12) ;
  \fill[black]  (D) circle(.12) ;
  \fill[black]  (E) circle(.12) ;
  
  \path (A) +(0,-.3) node {$a$}; 
  \path (B) +(0,+.3) node {$b$};
  \path (C) +(0,-.3) node {$c$};
  \path (D) +(0,+.3) node {$d$};
  \path (E) +(0,-.3) node {$e$};


  \path (2, 1 ) coordinate (B);
  \path ( 2.5 , 0 ) coordinate (C);
  \path (3, 1 ) coordinate (D);
  \path ( 3.5 , 0 ) coordinate (E);

  \begin{scope}[opacity=0.5, color=gray]
    \fill (B) -- (C) -- (D) -- cycle;
    \fill (C) -- (D) -- (E) -- cycle;
  \end{scope}

  \begin{scope}[ thick,black]
    \draw (B) -- (C) -- (D) -- cycle;
    \draw (C) -- (D) -- (E) -- cycle;
  \end{scope}
 
  \fill[black]  (B) circle(.12) ;
  \fill[black]  (C) circle(.12) ;
  \fill[black]  (D) circle(.12) ;
  \fill[green]    (E) circle(.12) ;
  
  \path (B) +(0,+.3) node {$b$};
  \path (C) +(0,-.3) node {$c$};
  \path (D) +(0,+.3) node {$d$};
  \path (E) +(0,-.3) node {$e$};


  \path (4.5, 1 ) coordinate (B);
  \path ( 5 , 0 ) coordinate (C);
  \path (5.5, 1 ) coordinate (D);

  \begin{scope}[opacity=0.5, color=gray]
    \fill (B) -- (C) -- (D) -- cycle;
  \end{scope}

  \begin{scope}[ thick,black]
    \draw (B) -- (C) -- (D) -- cycle;
  \end{scope}
 
  \fill[black]  (B) circle(.12) ;
  \fill[green]  (C) circle(.12) ;
  \fill[black]  (D) circle(.12) ;
  

  \path (B) +(0,+.3) node {$b$};
  \path (C) +(0,-.3) node {$c$};
  \path (D) +(0,+.3) node {$d$};

  \path (6.5, 1 ) coordinate (B);
  \path (7.5, 1 ) coordinate (D);

  \begin{scope}[ thick,black]
    \draw (B) -- (D);
  \end{scope}
 
  \fill[black]  (B) circle(.12) ;
  \fill[green]  (D) circle(.12) ;
  
  \path (B) +(0,+.3) node {$b$};
  \path (D) +(0,+.3) node {$d$};


  \path (8.5, 1 ) coordinate (B);

  \fill[green]  (B) circle(.12) ;
  
  \path (B) +(0,+.3) node {$b$};


  \path (9.5, .5) node{$\mathbf{0}$};

\draw [thick,->,snake=coil,segment length=6pt,segment aspect=0] (1.25,.5) -- (1.75,.5);
\draw [thick,->,snake=coil,segment length=6pt,segment aspect=0] (3.75,.5) -- (4.25,.5);
\draw [thick,->,snake=coil,segment length=6pt,segment aspect=0] (5.75,.5) -- (6.25,.5);
\draw [thick,->,snake=coil,segment length=6pt,segment aspect=0] (7.75,.5) -- (8.25,.5);
\draw [thick,->,snake=coil,segment length=6pt,segment aspect=0] (8.75,.5) -- (9.25,.5);
\end{tikzpicture}
\end{center}
On the contrary, the following complex is not acyclic: Graham reduction always fails, hitting a `cycle'.
\begin{center}
\begin{tikzpicture}[scale=1]
  \path (-1 , 0 ) coordinate (A);
  \path (-.5, 1 ) coordinate (B);
  \path ( 0 , 0 ) coordinate (C);
  \path (+.5, 1 ) coordinate (D);
  \path (+1 , 0 ) coordinate (E);

  \begin{scope}[opacity=0.5, color=gray]
    \fill (A) -- (B) -- (C) -- cycle;
    \fill (C) -- (D) -- (E) -- cycle;
  \end{scope}

  \begin{scope}[ thick,black]
    \draw (A) -- (B) -- (C) -- cycle;
    \draw (B) -- (C) -- (D) -- cycle;
    \draw (C) -- (D) -- (E) -- cycle;
  \end{scope}
 
  \fill[green]    (A) circle(.12) ;
  \fill[black]  (B) circle(.12) ;
  \fill[black]  (C) circle(.12) ;
  \fill[black]  (D) circle(.12) ;
  \fill[black]  (E) circle(.12) ;
  
  \path (A) +(0,-.3) node {$a$}; 
  \path (B) +(0,+.3) node {$b$};
  \path (C) +(0,-.3) node {$c$};
  \path (D) +(0,+.3) node {$d$};
  \path (E) +(0,-.3) node {$e$};


  \path (2, 1 ) coordinate (B);
  \path ( 2.5 , 0 ) coordinate (C);
  \path (3, 1 ) coordinate (D);
  \path ( 3.5 , 0 ) coordinate (E);

  \begin{scope}[opacity=0.5, color=gray]
    \fill (C) -- (D) -- (E) -- cycle;
  \end{scope}

  \begin{scope}[ thick,black]
    \draw (B) -- (C) -- (D) -- cycle;
    \draw (C) -- (D) -- (E) -- cycle;
  \end{scope}
 
  \fill[black]  (B) circle(.12) ;
  \fill[black]  (C) circle(.12) ;
  \fill[black]  (D) circle(.12) ;
  \fill[green]    (E) circle(.12) ;
  
  \path (B) +(0,+.3) node {$b$};
  \path (C) +(0,-.3) node {$c$};
  \path (D) +(0,+.3) node {$d$};
  \path (E) +(0,-.3) node {$e$};


  \path (4.5, 1 ) coordinate (B);
  \path ( 5 , 0 ) coordinate (C);
  \path (5.5, 1 ) coordinate (D);

  \begin{scope}[ thick,black]
    \draw (B) -- (C) -- (D) -- cycle;
  \end{scope}

  \draw[ thick, black] (B) -- (D);
  \draw[ thick, black] (C) -- (D);
  \draw[ thick, black] (C) -- (B);

  \fill[black]  (B) circle(.12) ;
  \fill[black]  (C) circle(.12) ;
  \fill[black]  (D) circle(.12) ;
  

%
%
%
%
%
%
%
%
%
%
%


  \fill[red] (6.5, .9) -- (6.6, 1) -- (7.5,0.1) -- (7.4,0) -- cycle;
  \fill[red] (6.5, .1) -- (6.6, 0) -- (7.5,0.9) -- (7.4,1) -- cycle;

\draw [thick,->,snake=coil,segment length=6pt,segment aspect=0] (1.25,.5) -- (1.75,.5);
\draw [thick,->,snake=coil,segment length=6pt,segment aspect=0] (3.75,.5) -- (4.25,.5);
\draw [thick,->,snake=coil,segment length=6pt,segment aspect=0] (5.75,.5) -- (6.25,.5);
  
\end{tikzpicture}
\end{center}

\begin{theorem}[\Vorobev~\cite{vorobev}, rephrased and with simplified condition \cite{RSB:DPhil-thesis-forth}]
  Let $\Sigma$ be a simplicial complex.
  Then any empirical model defined on $\Sigma$ is extendable
  if and only if $\Sigma$ is acyclic.
\end{theorem}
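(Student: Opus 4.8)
The plan is to prove the two implications separately. The \emph{if} direction (acyclicity implies extendability) is constructive: I would build the global distribution by following a Graham reduction of $\Sigma$ in reverse, gluing in the privately-owned vertices one reduction step at a time. The \emph{only if} direction is more delicate, and I would prove its contrapositive: if $\Sigma$ is not acyclic, then I must \emph{exhibit} a no-signalling model on $\Sigma$ that fails to extend.

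For the \emph{if} direction, I would induct on the length $r$ of a Graham reduction $\Sigma = \Sigma_0 \leadsto \cdots \leadsto \Sigma_r = \mathbf{0}$. The base case $\mathbf{0}$ is trivial (the unique distribution on the empty tuple). For the inductive step, suppose the first step is $\Sigma \leadsto \Sigma'$, removing the private vertices $\proper{C}$ of a maximal face $C$, and write $X' = X \setminus \proper{C}$ and $C' = C \setminus \proper{C}$. Given a model $(\mu_D)_{D \in \UU}$ on $\Sigma$, its restriction to $\Sigma'$ (well-defined and compatible by no-signalling) extends, by the induction hypothesis, to a global distribution $\mu_{X'}$ on outcome-assignments to $X'$. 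The two distributions $\mu_{X'}$ and $\mu_C$ agree on their overlap $C'$, so I can glue them using the conditional distribution of $\mu_C$ on the private coordinates: for assignments $s$ on $X'$ and $t$ on $\proper{C}$ set
\[\mu_X(s \cup t) := \mu_{X'}(s)\;\mu_C\mleft(t \mid s|_{C'}\mright) \Mdot\]
Summing over $t$ recovers $\mu_{X'}$, and summing over the fibres above each outcome of $C'$ recovers $\mu_C$ (here the agreement on $C'$ is exactly what is needed); since every maximal face $D \neq C$ already lives inside $X'$, the distribution $\mu_X$ marginalises correctly to all contexts. This completes the induction.

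For the \emph{only if} direction I would first reduce to the case of a \emph{core}: running Graham reduction to its normal form $\Sigma^*$, which (if $\Sigma$ is not acyclic) is a nonempty complex in which every vertex lies in at least two maximal faces, i.e. $\proper{C} = \emptyset$ for every maximal $C$. A non-extendable model on $\Sigma^*$ lifts to one on $\Sigma$: pad each context of $\Sigma$ with a deterministic outcome on its non-core vertices, i.e. take $\mu_D$ to be the $\Sigma^*$-marginal on $D \cap X^*$ in product with a point mass on $D \setminus X^*$. This is compatible, and any global section of it would marginalise to a global section of the core model, a contradiction. It therefore suffices to build a non-extendable model on a core, and here I would place a \emph{frustrated} binary-correlation model on a cyclic obstruction inside $\Sigma^*$ --- the direct generalisation of the hollow-triangle / PR-box phenomenon: along a chordless cycle of contexts, impose perfect correlation on all but one overlap and perfect anti-correlation on the last, with uniform single-vertex marginals so that no-signalling holds, while the demanded relations $v_1 = v_2 = \cdots = v_k \neq v_1$ are globally unsatisfiable.

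The main obstacle is the combinatorial heart of this last step: showing that a stuck Graham reduction genuinely forces such a cyclic witness. This is precisely the content of the equivalence between acyclicity (in the Graham / GYO sense) and the conjunction of conformality and chordality of the incidence structure, and I would import it through the database-theoretic correspondence alluded to in the text; once a chordless cycle (or a non-conformal context) is located, writing down the explicit frustrated no-signalling model and verifying its non-extendability is routine.
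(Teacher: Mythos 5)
First, a caveat on the comparison itself: the paper does not prove this theorem. It is imported from \Vorobev's original work and from the author's forthcoming thesis, so there is no in-paper argument to measure your proposal against; what follows assesses it on its own terms. Your \emph{if} direction is correct and essentially complete: the induction along a Graham reduction, gluing the global distribution $\mu_{X'}$ on $\Sigma'$ to $\mu_C$ via the conditional of $\mu_C$ on its private vertices $\proper{C}$, is the standard argument, and your observation that every maximal face other than $C$ already lies inside $X'$ is exactly what makes all the marginalisation checks go through (only cosmetic care is needed where the conditional is taken on a null event).

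The \emph{only if} direction follows the right strategy --- pass to the stuck core, lift a counterexample by padding with point masses (both of these steps are fine) --- but the last step has a genuine gap beyond the openly acknowledged import of the ``acyclic $=$ conformal $+$ chordal'' theorem. That dichotomy leaves two cases, and your frustrated-cycle gadget only covers one. If the core fails chordality, there is a chordless cycle of length at least $4$, no maximal face contains two non-consecutive cycle vertices, and your ``perfect correlation everywhere but one anti-correlation'' model works as described. But if the core fails conformality, the minimal witness is a set $S$ all of whose proper subsets are faces while $S$ itself is not --- the boundary of a simplex --- whose primal graph is \emph{complete}, hence chordal: there is no chordless cycle to frustrate, and for $|S|\geq 4$ the obstruction cannot be phrased as a cycle of pairwise (anti)correlations at all. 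A working gadget there is different in kind: put on every facet of the boundary of the simplex on $S$ the uniform distribution on even-parity assignments; all overlapping marginals are uniform, so the family is no-signalling, yet any global extension would have to be supported on the (one or two) all-equal assignments satisfying every parity constraint, whose facet marginals are then supported on at most two points rather than on all $2^{|S|-2}$ even-parity strings. Note that the non-extendability argument here is not ``the constraints are jointly unsatisfiable'' (they are satisfiable) but ``the forced support contradicts the prescribed marginals''. So the step you describe as routine genuinely requires a second construction and a second style of argument; with that supplied, your outline becomes a correct proof.
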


\subsection{Structural reason: tripartite example}\label{ssec:structural-tripartiteexample}
We mentioned above that, for the scenario we are considering,
any empirical model gives rise to local average behaviour correlations.
The structural reason for this is the fact that the quotient of the scenario by the identification
of sites $B$ and $C$ is acyclic. 
Let us look at this in more detail.

Our scenario is represented by the simplicial complex
$\Disj_2 \sjoin \Disj_2 \sjoin \Disj_2$, where the factors correspond to sites $A$, $B$ and $C$.
This is the hollow octahedron we depicted before, in Section \ref{ssec:empirical-models}.
Given that we want to identify $B$ and $C$,
we regard this complex as\footnote{\label{footnote:notationsigma}The $\Sigma_{n,\vec{k},\vec{r}}$ notation on the left-hand side
will be introduced in Section \ref{sec:gen-multipartite}; it is provided here just for reference:
$n$ stands for the number of `macroscopic' sites,
$k_i$ for the number of measurement settings available at site $i$, and $r_i$ for the number of `microscopic' sites in, or copies of, site $i$.
The reader is referred to Section \ref{ssec:meas-scenarios} for the notation on the right-hand side.}
\[\Sigma_{n=2,k_1=2,k_2=2,r_1=1,r_2=2} \;\;\;\;:=\;\;\;\; \Disj_2 \sjoin \Disj_2^{\sjoin 2} \;\;=\;\; \Disj_2 \sjoin (\Disj_2 \sjoin \Disj_2) \Mcomma\]
with sites $B$ and $C$ `grouped' together in the second factor on which the identification $b_i \sim c_i$ acts.

We shall explicitly see what the quotient is.
The first step is so-called semiregularisation, where we remove edges between vertices that are being identified
as such edges are unnecessary.
So, in this case,
we must remove the edges $\enset{b_1,c_1}$ and $\enset{b_2,c_2}$, obtaining the following simplicial complex:
\[
  \sr(\Disj_2 \sjoin \Disj_2^{\sjoin 2}) \;\;\;=\;\;\;\; 
\begin{tikzpicture}[baseline={([yshift=-.5ex]current bounding box.center)},scale=3]
  \path (-1 , 0 ) coordinate (A1);
  \path ( 1 , 0 ) coordinate (A2);
  \path (-.2,-.5) coordinate (B1);
  \path (+.2,+.5) coordinate (B2);
  \path (-.2,+.2) coordinate (C1);
  \path (+.2,-.2) coordinate (C2);

  \begin{scope}[opacity=0.5, color=gray]
    \fill (A1) -- (B1) -- (C2) -- cycle;
    \fill (A1) -- (B2) -- (C1) -- cycle;
    \fill (A2) -- (B1) -- (C2) -- cycle;
    \fill (A2) -- (B2) -- (C1) -- cycle;
  \end{scope}

  \begin{scope}[ thick,black]
    \draw (A1) -- (B1) -- (C2) -- cycle;
    \draw (A1) -- (B2) -- (C1) -- cycle;
    \draw (A2) -- (B1) -- (C2) -- cycle;
    \draw (A2) -- (B2) -- (C1) -- cycle;
  \end{scope}
 
  \fill[red]   (A1) circle(.03) ;
  \fill[red]   (A2) circle(.03) ;
  \fill[green] (B1) circle(.03) ;
  \fill[green] (B2) circle(.03) ;
  \fill[blue]  (C1) circle(.03) ;
  \fill[blue]  (C2) circle(.03) ;
  
  \path (A1) +(-.1 ,0  ) node {$a_1$}; 
  \path (A2) +(+.1 ,0  ) node {$a_2$};
  \path (B1) +(0   ,-.1) node {$b_1$};
  \path (B2) +(0   ,+.1) node {$b_2$};
  \path (C1) +(-.06,-.08) node {$c_1$};
  \path (C2) +(-.06,+.08) node {$c_2$};
\end{tikzpicture}
\]
Now, taking the quotient, we will identify the measurements $b_1$ and $c_1$ as $m_1$, and
$b_2$ and $c_2$ as $m_2$. We obtain the following simplicial complex:
\[
  \sr(\Disj_2 \sjoin \Disj_2^{\sjoin 2})/(S_1 \times S_2) \;\;\;=\;\;\;\; 
\begin{tikzpicture}[baseline={([yshift=-.5ex]current bounding box.center)},scale=3]
  \path (-1 , 0 ) coordinate (A1);
  \path ( 1 , 0 ) coordinate (A2);
  \path (-.2,-.15) coordinate (M1);
  \path (+.2,+.15) coordinate (M2);
  \path (-.2,-.5) coordinate (B1);
  \path (+.2,+.5) coordinate (B2);
  \path (-.2,+.2) coordinate (C1);
  \path (+.2,-.2) coordinate (C2);

  \begin{scope}[opacity=0.5, color=gray]
    \fill (A1) -- (M1) -- (M2) -- cycle;
    \fill (A2) -- (M1) -- (M2) -- cycle;
  \end{scope}

  \begin{scope}[ thick,black]
    \draw (A1) -- (M1) -- (M2) -- cycle;
    \draw (A2) -- (M1) -- (M2) -- cycle;
  \end{scope}
 
  \fill[red]   (A1) circle(.03) ;
  \fill[red]   (A2) circle(.03) ;

  \begin{scope}
  \clip (M1) circle (.05);
  \fill[blue]  (M1) rectangle +(-1,+1);
  \fill[blue]  (M1) rectangle +(+1,+1);
  \fill[green] (M1) rectangle +(-1,-1);
  \fill[green] (M1) rectangle +(+1,-1);
  \end{scope}
  \begin{scope}
  \clip (M2) circle (.05);
  \fill[blue]  (M2) rectangle +(-1,-1);
  \fill[blue]  (M2) rectangle +(+1,-1);
  \fill[green] (M2) rectangle +(-1,+1);
  \fill[green] (M2) rectangle +(+1,+1);
  \end{scope}
  
  \path (A1) +(-.1 ,0  ) node {$a_1$}; 
  \path (A2) +(+.1 ,0  ) node {$a_2$};
  \path (M1) +(-.06,-.08) node {$m_1$};
  \path (M2) +(+.06,+.08) node {$m_2$};
\end{tikzpicture}
\]
Observe that the set of maximal faces (i.e. the cover of maximal contexts) is
\[\enset{\enset{a_1,m_1,m_2},\enset{a_2,m_1,m_2}} \Mdot\]
So, more things are compatible than in the usual bipartite scenario, which has cover
\[\enset{\enset{a_1,m_1},\enset{a_1,m_2},\enset{a_2,m_1},\enset{a_2,m_2}} \Mdot\]

As it happens, any empirical model defined on the original complex $\Disj_2 \sjoin \Disj_2^2$
will give rise to another model defined on the quotient scenario, by taking averages along the faces
being identified. Therefore, not only are the probabilities $p(a_i,m_j=\dots)$ defined via an average,
giving a model on the usual bipartite scenario above,
so are the probabilities $p(a_i, m_1, m2 = \dots)$,
yielding a model on the more compatible bipartite scenario that arises as a quotient.
The probability distribution on the triangle $\enset{a_i,m_1,m_2}$
is obtained as an average of the probability distributions on the top and bottom triangles that gave rise to it, namely
of $p(a_i,b_1,c_2 = \dots)$ and $p(a_i,c_1,b_2 = \dots)$.

The quotient complex we have obtained does satisfy the \Vorobev~ condition of acyclicity.
This is easy to see: one can remove the vertices, for example, in the order $a_1$, $a_2$, $m_1$, $m_2$.
Therefore,  no matter
which empirical model $p(a_i,b_j,c_k=\cdots)$ we start from, the model of average macroscopic behaviour, $p(a_i,m_j=\cdots)$,
is local. In particular, it satisfies any  Bell inequality.
Hence, by the equivalence discussed in Section \ref{ssec:Bellineqs}, the original
tripartite model also satisfies a monogamy relation for any of these bipartite Bell inequalities.

\subsection{A non-acyclic example}\label{ssec:structural-nonacyclicexample}
Let us now consider an example where one does not get monogamy relations,
or equivalently, where one does not necessarily get local macroscopic averages. 
Suppose that we again have a tripartite ($A$, $B$, $C$) scenario, but that this time
$B$ and $C$ have 3 available measurement settings each.
In a compositional notation (as explained in footnote \ref{footnote:notationsigma}
at the start of Section \ref{ssec:structural-tripartiteexample})
and since we are again interested in identifying the sites $B$ and $C$,
this scenario is represented by the simplicial complex
\[\Sigma_{n=2,k_1=2,k_2=3,r_1=1,r_2=2} \;\;\;\;:=\;\;\;\; \Disj_2 \sjoin \Disj_3^{\sjoin 2} \;\;=\;\; \Disj_2 \sjoin (\Disj_3 * \Disj_3) \Mdot\]
The maximal contexts are
\begin{align*}
\UU = \{ &
\enset{a_1,b_1,c_1},
\enset{a_1,b_1,c_2},
\enset{a_1,b_1,c_3},
\enset{a_1,b_2,c_1},
\enset{a_1,b_2,c_2},
\enset{a_1,b_2,c_3},
\\&
\enset{a_1,b_3,c_1},
\enset{a_1,b_3,c_2},
\enset{a_1,b_3,c_3},
\enset{a_2,b_1,c_1},
\enset{a_2,b_1,c_2},
\enset{a_2,b_1,c_3},
\\&
\enset{a_2,b_2,c_1},
\enset{a_2,b_2,c_2},
\enset{a_2,b_2,c_3},
\enset{a_2,b_3,c_1},
\enset{a_2,b_3,c_2},
\enset{a_2,b_3,c_3} \}
\end{align*}
and half the simplicial complex is depicted below
(one should imagine the other half, a mirror image of this consisting of
the faces that include $a_2$ instead of $a_1$, collated to it;
we choose  to omit that part as it would make the picture more confusing and hard to visualise):
\begin{center}
\begin{tikzpicture}[scale=3]
  \path (-1 , 0 )  coordinate (A1);
  \path (-.2,-.5)  coordinate (B1);
  \path (+.2,+.5)  coordinate (C1);
  \path (-.3,-.2) coordinate (C2);
  \path (-.2,+.2)  coordinate (B3);
  \path (+.2,-.2)  coordinate (C3);
  \path (+.3,+.2) coordinate (B2);

  \begin{scope}[opacity=0.5, color=gray]
    \fill (A1) -- (B1) -- (C1) -- cycle;
    \fill (A1) -- (B1) -- (C2) -- cycle;
    \fill (A1) -- (B1) -- (C3) -- cycle;
    \fill (A1) -- (B2) -- (C1) -- cycle;
    \fill (A1) -- (B2) -- (C2) -- cycle;
    \fill (A1) -- (B2) -- (C3) -- cycle;
    \fill (A1) -- (B3) -- (C1) -- cycle;
    \fill (A1) -- (B3) -- (C2) -- cycle;
    \fill (A1) -- (B3) -- (C3) -- cycle;
  \end{scope}

  \begin{scope}[ thick,black]
    \draw (A1) -- (B1) -- (C1) -- cycle;
    \draw (A1) -- (B1) -- (C2) -- cycle;
    \draw (A1) -- (B1) -- (C3) -- cycle;
    \draw (A1) -- (B2) -- (C1) -- cycle;
    \draw (A1) -- (B2) -- (C2) -- cycle;
    \draw (A1) -- (B2) -- (C3) -- cycle;
    \draw (A1) -- (B3) -- (C1) -- cycle;
    \draw (A1) -- (B3) -- (C2) -- cycle;
    \draw (A1) -- (B3) -- (C3) -- cycle;
  \end{scope}
 
  \fill[red]   (A1) circle(.03) ;
  \fill[green] (B1) circle(.03) ;
  \fill[green] (B2) circle(.03) ;
  \fill[green] (B3) circle(.03) ;
  \fill[blue]  (C1) circle(.03) ;
  \fill[blue]  (C2) circle(.03) ;
  \fill[blue]  (C3) circle(.03) ;
  
  \path (A1) +(-.1 ,0  )  node {$a_1$}; 
  \path (B1) +(0   ,-.1)  node {$b_1$};
  \path (B2) +(+.1   ,0)  node {$b_2$};
  \path (B3) +(-.06,-.08) node {$b_3$};
  \path (C1) +(0   ,+.1)  node {$c_1$};
  \path (C2) +(-.1 ,  0)  node {$c_2$};
  \path (C3) +(+.06,-.08) node {$c_3$};
\end{tikzpicture}
\end{center}
We consider the identifications $b_i \sim c_i$ ($i=1,2,3$).
Again, we first discard the edges between identified measurements, namely $\enset{b_1,c_1}$, $\enset{b_2,c_2}$, and $\enset{b_3,c_3}$. 
The resulting complex, $\sr(\Disj_2 \sjoin \Disj_3^{\sjoin 2})$, is depicted
below (as above, we just depict half of it):
\begin{center}
\begin{tikzpicture}[scale=3]
  \path (-1 , 0 )  coordinate (A1);
  \path (-.2,-.5)  coordinate (B1);
  \path (+.2,+.5)  coordinate (C1);
  \path (-.3,-.2) coordinate (C2);
  \path (-.2,+.2)  coordinate (B3);
  \path (+.2,-.2)  coordinate (C3);
  \path (+.3,+.2) coordinate (B2);

  \begin{scope}[opacity=0.5, color=gray]
    \fill (A1) -- (B1) -- (C2) -- cycle;
    \fill (A1) -- (B1) -- (C3) -- cycle;
    \fill (A1) -- (B2) -- (C1) -- cycle;
    \fill (A1) -- (B2) -- (C3) -- cycle;
    \fill (A1) -- (B3) -- (C1) -- cycle;
    \fill (A1) -- (B3) -- (C2) -- cycle;
   \end{scope}

  \begin{scope}[ thick,black]
    \draw (A1) -- (B1) -- (C2) -- cycle;
    \draw (A1) -- (B1) -- (C3) -- cycle;
    \draw (A1) -- (B2) -- (C1) -- cycle;
    \draw (A1) -- (B2) -- (C3) -- cycle;
    \draw (A1) -- (B3) -- (C1) -- cycle;
    \draw (A1) -- (B3) -- (C2) -- cycle;
  \end{scope}
 
  \fill[red]   (A1) circle(.03) ;
  \fill[green] (B1) circle(.03) ;
  \fill[green] (B2) circle(.03) ;
  \fill[green] (B3) circle(.03) ;
  \fill[blue]  (C1) circle(.03) ;
  \fill[blue]  (C2) circle(.03) ;
  \fill[blue]  (C3) circle(.03) ;
  
  \path (A1) +(-.1 ,0  )  node {$a_1$}; 
  \path (B1) +(0   ,-.1)  node {$b_1$};
  \path (B2) +(+.1   ,0)  node {$b_2$};
  \path (B3) +(-.06,-.08) node {$b_3$};
  \path (C1) +(0   ,+.1)  node {$c_1$};
  \path (C2) +(-.1 ,  0)  node {$c_2$};
  \path (C3) +(+.06,-.08) node {$c_3$};
\end{tikzpicture}
\end{center}
The quotient then identifies $b_i$ with $c_i$, yielding the following
simplicial complex (half of it, as before):
\begin{center}
\begin{tikzpicture}[scale=3]
  \path (-1 , 0 )  coordinate (A1);
  \path (-.2,-.5)  coordinate (B1);
  \path (+.2,+.5)  coordinate (M1);
  \path (-.3,-.2)  coordinate (M2);
  \path (-.2,+.2)  coordinate (B3);
  \path (+.2,-.2)  coordinate (M3);
  \path (+.3,+.2)  coordinate (B2);

  \begin{scope}[opacity=0.5, color=gray]
    \fill (A1) -- (M1) -- (M2) -- cycle;
    \fill (A1) -- (M1) -- (M3) -- cycle;
    \fill (A1) -- (M2) -- (M3) -- cycle;
  \end{scope}

  \begin{scope}[ thick,black]
    \draw (A1) -- (M1) -- (M2) -- cycle;
    \draw (A1) -- (M1) -- (M3) -- cycle;
    \draw (A1) -- (M2) -- (M3) -- cycle;
  \end{scope}
 
  \fill[red]   (A1) circle(.03) ;
 \begin{scope}
  \clip (M1) circle (.05);
  \fill[blue]  (M1) rectangle +(-1,+1);
  \fill[blue]  (M1) rectangle +(+1,+1);
  \fill[green] (M1) rectangle +(-1,-1);
  \fill[green] (M1) rectangle +(+1,-1);
  \end{scope}
  \begin{scope}
  \clip (M2) circle (.05);
  \fill[blue]  (M2) rectangle +(-1,+1);
  \fill[blue]  (M2) rectangle +(+1,+1);
  \fill[green] (M2) rectangle +(-1,-1);
  \fill[green] (M2) rectangle +(+1,-1);
  \end{scope} 
  \begin{scope}
  \clip (M3) circle (.05);
  \fill[blue]  (M3) rectangle +(-1,+1);
  \fill[blue]  (M3) rectangle +(+1,+1);
  \fill[green] (M3) rectangle +(-1,-1);
  \fill[green] (M3) rectangle +(+1,-1);
  \end{scope}
  
  \path (A1) +(-.1 ,0  )  node {$a_1$}; 
  \path (M1) +(0   ,+.1)  node {$m_1$};
  \path (M2) +(-.1,-.1) node {$m_2$};
  \path (M3) +(+.1,-.1) node {$m_3$};
\end{tikzpicture}
\end{center}
Collating the missing half of the picture, this is a hollow triangular bipyramid,
a complex with six two-dimensional maximal faces:
\begin{align*}
       \{ & \enset{a_1,m_1,m_2}, \enset{a_1,m_2,m_3}, \enset{a_1,m_3,m_1}, \\
          & \enset{a_2,m_1,m_2}, \enset{a_2,m_2,m_3}, \enset{a_2,m_3,m_1}\} 
\end{align*}
which clearly does not satisfy the acyclicity condition of \Vorobev's theorem.
Indeed, one can find empirical models
for the original measurement scenario whose `quotient' average macroscopic behaviour
is non-local. The reason for this is that we have too many measurement settings available
and not enough microscopic sites (or independent copies of Bob) to dilute the information these measurements can obtain.

The situation becomes different if there is another site $D$ (with measurements
$d_1,d_2,d_3$) and the sites $B$, $C$ and $D$ are identified as forming the same macroscopic site (or as being three copies of Bob).
The complex in this case is 
\[\Sigma_{n=2,k_1=2,k_2=3,r_1=1,r_2=3} \;\;\;\;:=\;\;\;\; \Disj_2 \sjoin \Disj_3^{\sjoin 3} \;\;=\;\; \Disj_2 \sjoin (\Disj_3 \sjoin \Disj_3 \sjoin \Disj_3) \Mcomma\]
and its quotient is a solid, rather than hollow, triangular bipyramid (two filled tetrahedrons collated together).
In keeping with the previous examples, we depict only half of the simplicial complex: 
\begin{center}
\begin{tikzpicture}[scale=3]
  \path (-1 , 0 )  coordinate (A1);
  \path (-.2,-.5)  coordinate (B1);
  \path (+.2,+.5)  coordinate (M1);
  \path (-.3,-.2)  coordinate (M2);
  \path (-.2,+.2)  coordinate (B3);
  \path (+.2,-.2)  coordinate (M3);
  \path (+.3,+.2)  coordinate (B2);

  \begin{scope}[opacity=0.85]
    \fill (A1) -- (M1) -- (M3) -- (M2) -- cycle;
  \end{scope}

  \begin{scope}[ thick,black]
    \draw (A1) -- (M1) -- (M2) -- cycle;
    \draw (A1) -- (M1) -- (M3) -- cycle;
    \draw (A1) -- (M2) -- (M3) -- cycle;
  \end{scope}
 
  \fill[red]   (A1) circle(.03) ;
 \begin{scope}
  \clip (M1) circle (.05);
  \fill[blue]  (M1) rectangle +(-1,+1);
  \fill[blue]  (M1) rectangle +(+1,+1);
  \fill[green] (M1) rectangle +(-1,-1);
  \fill[green] (M1) rectangle +(+1,-1);
  \end{scope}
  \begin{scope}
  \clip (M2) circle (.05);
  \fill[blue]  (M2) rectangle +(-1,+1);
  \fill[blue]  (M2) rectangle +(+1,+1);
  \fill[green] (M2) rectangle +(-1,-1);
  \fill[green] (M2) rectangle +(+1,-1);
  \end{scope} 
  \begin{scope}
  \clip (M3) circle (.05);
  \fill[blue]  (M3) rectangle +(-1,+1);
  \fill[blue]  (M3) rectangle +(+1,+1);
  \fill[green] (M3) rectangle +(-1,-1);
  \fill[green] (M3) rectangle +(+1,-1);
  \end{scope}
  
  \path (A1) +(-.1 ,0  )  node {$a_1$}; 
  \path (M1) +(0   ,+.1)  node {$m_1$};
  \path (M2) +(-.1,-.1) node {$m_2$};
  \path (M3) +(+.1,-.1) node {$m_3$};
\end{tikzpicture}
\end{center}
The set of maximal faces of this complex (i.e. the cover of maximal contexts of this scenario) is
\[ \enset{\enset{a_1,m_1,m_2,m_3}, \enset{a_2,m_1,m_2,m_3}} \Mcomma\]
from which it is clear that the complex is acyclic,
ensuring that any no-signalling model satisfies all monogamy relations, and that all average macroscopic
models are local.
The point we are hinting at is that, in order to guarantee monogamy and local averages,
there must be at least as many microscopic sites in each macroscopic site as there are
measurement settings available at that site.

\section{General multipartite scenarios}\label{sec:gen-multipartite}
We now look at multipartite scenarios in general.
We consider the general scenario already mentioned in the item list in Section \ref{sec:introduction}:
we have $n$ (macroscopic) sites $1, \ldots, n$ (also denoted by $A, B, C, \ldots$);
each site $i$ has $k_i$ measurement settings;
and we have $r_i$ copies of site $i$, or microscopic sites constituting the macroscopic site $i$.
If we write $A$ for a (macroscopic) site,
then 
$A^{(1)}, \ldots, A^{(r_A)}$
denote the several copies of it or microscopic sites constituting it,
and
$a^{(m)}_1, \dots, a^{(m)}_{k_A}$
are the measurements for the $m$-th copy or microscopic site $A^{(m)}$, where $m \in \enset{1, \ldots, r_A}$.

Such a scenario is therefore determined by the positive integers $n, k_1, \ldots, k_n, r_1, \ldots, r_n$.
The simplicial complex representing this scenario is
\[\Sigma_{n,\vec{k},\vec{r}} \;\;\;\;:=\;\;\;\; \Disjcopy{k_1}{r_1} \sjoin \cdots \sjoin \Disjcopy{k_n}{r_n} \Mdot\]
For example, as already mentioned in Section \ref{ssec:structural-tripartiteexample},
our main example measurement scenario,
the tripartite simplicial complex $\Disj_2 \sjoin \Disj_2 \sjoin \Disj_2$
where we want to lump together the second and third sites,
is written as \[\Sigma_{n=2,k_1=2,k_2=2,r_1=1,r_2=2} \;\;\;\;=\;\;\;\; \Disj_2 \sjoin \Disj_2^{\sjoin 2} \Mdot\]
Other examples were also provided in the previous section.


On such a scenario, we have a symmetry that identifies the appropriate copies or microscopic sites that are to be lumped together. It identifies the measurements:
\begin{align*}
a^{(1)}_j \sim \cdots \sim a^{(r_A)}_j \;\;&\;\; (\forall j \in \enset{1, \dots, k_A}), \\
b^{(1)}_j \sim \cdots \sim b^{(r_B)}_j \;\;&\;\; (\forall j \in \enset{1, \dots, k_B}), \\
\text{etc.}
\end{align*}
Formally, this symmetry is given by an action of the group $S_{r_1} \times \cdots \times S_{r_n}$, 
where $S_l$, the symmetric group on $l$ elements, acts on each of the factors corresponding to `macroscopic' sites.
We are interested in knowing under which conditions the quotient of (the semiregularisation of)
$\Sigma_{n,\vec{k},\vec{r}}$ by this symmetry is acyclic.

\begin{proposition}\label{prop:nkr-vorobev}
  The quotient of the measurement scenario $\sr(\Sigma_{n,\vec{k},\vec{r}})$ by the symmetry above
  is acyclic 
   iff one of the following holds:
   \begin{enumerate}[label=(\roman*)]
     \item\label{item:nkr-condAll}   
       each site has at least as many microscopic sites or copies as it has measurement settings,
       i.e. $\Forall{i \in \enset{1, \ldots, n}} k_i \leq r_i$;
     \item\label{item:nkr-condAllButOne} one of the sites has a single copy
       and the condition above is satisfied by all the other sites,
       i.e. $\Exists{i_0} \left(r_{i_0} = 1 \;\land\; \Forall{i \in \enset{1,\ldots \widehat{i_0} \ldots,n}} k_i \leq r_i\right)$.
   \end{enumerate}
\end{proposition}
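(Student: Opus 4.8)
The plan is to compute the quotient complex $\sr(\Sigma_{n,\vec{k},\vec{r}})/(S_{r_1}\times\cdots\times S_{r_n})$ explicitly and then to read off its Graham reducibility directly, by determining which vertices are private to a single maximal face. First I would pin down the quotient factor-wise. Within the factor $\Disjcopy{k_i}{r_i} = \Disj_{k_i}^{\sjoin r_i}$ for site $i$, a face selects at most one vertex from each of the $r_i$ copies; semiregularisation then deletes the edges joining two copies of the same setting, so the surviving faces are exactly those selecting vertices with \emph{pairwise distinct} settings. Quotienting by $S_{r_i}$ collapses each such face to its set of settings, a subset of the $k_i$ settings of size at most $\min(r_i,k_i)$. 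Hence the quotient of this factor is the $(\min(r_i,k_i)-1)$-skeleton of the simplex $\Delta_{k_i}$ on the settings: the \emph{full} simplex when $r_i \geq k_i$, and a proper skeleton when $r_i < k_i$ (in particular the discrete complex $\Disj_{k_i}$ when $r_i = 1$). As the symmetry and semiregularisation act factor-wise, the full quotient is the join $Q = Q_1 \sjoin \cdots \sjoin Q_n$ of these, whose maximal faces are obtained by contributing all $k_i$ settings at each site with $r_i \geq k_i$ and an arbitrary $r_i$-subset (one of $\binom{k_i}{r_i}\geq 2$ choices) at each site with $r_i < k_i$.

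For the two sufficiency cases I would argue as follows. If \ref{item:nkr-condAll} holds there are no deficient sites, so $Q$ is a join of full simplices and hence itself a single simplex, which is trivially acyclic. If \ref{item:nkr-condAllButOne} holds, the site $i_0$ with $r_{i_0}=1$ contributes a discrete factor $\Disj_{k_{i_0}}$ while all other factors are full simplices whose join is a simplex $\Delta_K$; thus $Q = \Disj_{k_{i_0}} \sjoin \Delta_K$. Each vertex $v_j$ of $\Disj_{k_{i_0}}$ lies in the unique maximal face $\enset{v_j}\cup V(\Delta_K)$, so $\proper{C}$ is non-empty there; removing these $k_{i_0}$ vertices one at a time collapses $Q$ onto $\Delta_K$, which is then reduced as a simplex. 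Hence $Q$ is acyclic.

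For necessity I would show that when neither condition holds, Graham reduction cannot even begin. The failure of both leaves exactly two situations: at least two deficient sites, or a single deficient site $i_0$ with $r_{i_0}\geq 2$. In either case every vertex of $Q$ lies in at least two maximal faces. A vertex of a saturated site lies in all of them, and there are at least two since a deficient site is present. A vertex $v^i_j$ of a deficient site $i$ lies in $\binom{k_i-1}{r_i-1}$ maximal faces for each choice of $r_{i'}$-subsets at the other deficient sites, a count which exceeds $1$ because either some other deficient site contributes at least two subsets, or, when $i$ is the only deficient site, $r_i\geq 2$ together with $k_i>r_i$ forces $\binom{k_i-1}{r_i-1}\geq 2$. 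Thus $\proper{C}=\emptyset$ for every maximal face $C$, no reduction step exists, and the non-empty complex $Q$ fails to be acyclic. Matching these two failure cases against the two sufficiency cases then gives the stated characterisation.

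The main obstacle I anticipate is making the first step fully rigorous: verifying that semiregularisation followed by the group quotient yields a genuine simplicial complex, and that this operation commutes with the simplicial join, so that $Q$ really is the join of the skeleta described (the examples in Section~\ref{ssec:structural-tripartiteexample} and Section~\ref{ssec:structural-nonacyclicexample} are exactly the special cases $\Disj_2\sjoin\Delta_2$, $\Disj_2\sjoin(\text{hollow triangle})$, and $\Disj_2\sjoin\Delta_3$, which I would use as sanity checks). Once this structural identification is in place, the acyclicity analysis reduces to the elementary counting of maximal faces above, and no computation with specific probability distributions is needed, since \Vorobev's condition depends only on the complex.
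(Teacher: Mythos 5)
Your argument is correct and complete. Note that the paper itself does not prove Proposition~\ref{prop:nkr-vorobev} --- it defers the proof to a forthcoming longer version and offers only the worked examples of Sections~\ref{ssec:structural-tripartiteexample} and~\ref{ssec:structural-nonacyclicexample} as intuition --- but your route is exactly the one those examples set up: identify the quotient of $\sr(\Sigma_{n,\vec{k},\vec{r}})$ factor-wise as a join, with the factor for site $i$ being the $(\min(k_i,r_i)-1)$-skeleton of the simplex on its $k_i$ settings, and then decide Graham reducibility by counting, for each vertex, the maximal faces containing it. Both halves check out, including the reduction of the negation of (i)--(ii) to the two failure configurations (at least two deficient sites, or a single deficient site with $r_{i_0}\geq 2$) and the binomial-coefficient bounds showing every vertex then lies in at least two maximal faces, so that no reduction step can even begin.
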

\begin{proof}
See \cite{RSB:DPhil-thesis-forth} for the proof of this result.
The examples of Sections \ref{ssec:structural-tripartiteexample} and \ref{ssec:structural-nonacyclicexample}
provide some intuition.
\end{proof}

The way in which this proposition splits into two cases might strike one as strange at first sight.
The first case is better suited for a reading of the result in terms of macroscopic averages,
whereas the second case resembles more closely the usual monogamy relations, where one deals with the correlations
shared by a single party with several others. 
As mentioned in Section \ref{sec:introduction},
we can read the result of Proposition \ref{prop:nkr-vorobev} as a generalisation of the results of
Ramanathan et al. \cite{RamanathanEtAl:LocalRealismOfMacroscopicCorrelations} and Paw{\l}owski \& Brukner \cite{PawlowskiBrukner:MonogamyOfBellIneqsInNonsigTheories}.

From the former's perspective, suppose that we have several (a large number of) microsystems distributed over $n$ sites, with $r_i$ microsystems at site $i$.
The group $S_{r_1} \dirprod \cdots \dirprod S_{r_n}$ captures the symmetry of the system: we can interchange any of the $r_i$
microsystems within the same site $i$.
Now assume there are $k_i$ measurement settings available at each site $i$. Microscopically, we need to consider
$k_i$ possible measurements for each microsystem.
But we consider that, macroscopically, only the average behaviour is accessible, with the corresponding
measurements being lumped together as $k_i$ averaged measurements.
The fact that the quotient is acyclic as long as there are enough microsystems in each site means the following:
no matter what the statistics for all the original microscopic measurements are (as long as they satisfy no-signalling),
the average behaviour is classical, in the sense that it admits a local hidden variable description.
This generalises the paper's result because it holds for any no-signalling theory and not just for quantum mechanics.

Note, however, that by augmenting the number of (macroscopic) measurements that one performs,
it would in principle be possible to detect non-locality on the average macroscopic correlations.
However, this soon becomes impractical if one has a large (say $\approx\!10^{23}$) number of microsystems in each site.
So, it seems that the limitation on our experimental capability of performing enough measurements
makes the average behaviour appear local.

From the point of view of Paw{\l}owski \& Brukner \cite{PawlowskiBrukner:MonogamyOfBellIneqsInNonsigTheories}, we start with an
$n$-partite scenario with $k_i$ measurement settings for each site $i$.
Then the question is: fixing the first site (or any other for that matter),
how many copies of the other sites do we need to consider
so that the monogamy relation for the violation of any $n$-partite Bell-type inequality holds?
(See equation \eqref{eq:monogamymulti} for the general form of such a monogamy relation.)
That is, with how many copies of the other sites can Alice violate the same Bell-type inequality?
The authors of the paper consider only the case $n=2$ and show that one can take $k_2$ copies of the second site in order to get the monogamy relations.
Our proposition above generalises this for any $n$,
giving the correct monogamy relation for this general case.

Moreover, our proposition is a complete characterisation: 
not only does it say that it suffices to take $k_i$ copies of each site $i$, 
it also says that taking less than that is not enough. That is, if one takes less copies of some site,
there exists a no-signalling empirical model
that violates the monogamy relations.
Similarly, the interpretation in terms of locality of macroscopic averages is also an equivalence. 
This is another way in which our result generalises both papers.

\section{Conclusions and outlook}\label{sec:conclusions}
This work explores a connection between monogamy of non-locality and the locality of average macroscopic behaviour in multipartite scenarios.
We show that both can be explained by a structural property of the simplicial complex representing the compatibility of measurements
in the scenario: after taking a quotient by an appropriate symmetry along which one takes the average or considers the monogamy relation,
the resulting complex should be acyclic, hence inherently local or non-contextual according to \Vorobev's theorem.
This means, in particular, that the proof is independent of quantum mechanics and works more generally for any no-signalling theory.
In the present document, we have motivated and illustrated the main ideas behind this analysis
via some simple example measurement scenarios.

The language of simplicial complexes, as used in the sheaf-theoretic framework \cite{AbramskyBrandenburger},
allows one to describe not only the Bell-type multipartite scenarios familiar from discussions of non-locality that we have been considering,
but also more general contextuality scenarios, such as Kochen--Specker configurations \cite{KochenSpecker}.
In upcoming work, we develop a scheme formalising our analysis in this more general setting.
The result for Bell-type scenarios stated in Proposition \ref{prop:nkr-vorobev},
whose full proof will also appear there,
can be seen as a first instance or application of that scheme.
Future work includes applying this scheme in different kinds of scenarios to yield 
monogamy relations for contextuality inequalities and to study non-contextuality of macroscopic averages.

\section*{Acknowledgements}
I thank Samson Abramsky, Adam Brandenburger, and Shane Mansfield
for valuable guidance, discussions, and comments on several versions of this work.
I also thank Miguel Navascu\'es for some very important clarifications.
Finally, I thank audiences of the seminars at Paris Diderot and ParisTech for their helpful feedback.

I gratefully acknowledge support from the
Marie Curie Initial Training Network MALOA -- From MAthematical LOgic to Applications, PITN-GA-2009-238381,
and from FCT -- Funda\c{c}\~{a}o para a Ci\^{e}ncia e Tecnologia (the Portuguese Foundation for Science and Technology),
PhD grant SFRH/BD/94945/2013.

\bibliographystyle{eptcs}
\bibliography{refs}

\end{document}